\documentclass{acm_proc_article-sp}
\pdfoutput=1

\usepackage{graphicx}
\usepackage{subfigure}
\usepackage{float}
\usepackage{enumerate}
\usepackage{bbm}

\def\B{{\bf B}}

\def\L{{\bf L}}

\def\S{{\bf S}}

\def\T{{\bf T}}

\def\0{{\bf 0}}
\def\1{{\bf 1}}

\def\Ome{\mbox{\boldmath$\Omega$\unboldmath}}

\def\argmax{\mathop{\rm argmax}}
\def\argmin{\mathop{\rm argmin}}

\def\Lsize{\hbox{\space \raise-2mm\hbox{$\textstyle \L \atop \scriptstyle {m\times 3n}$} \space}}
\def\Ssize{\hbox{\space \raise-2mm\hbox{$\textstyle \S \atop \scriptstyle {m\times 3n}$} \space}}
\def\Osize{\hbox{\space \raise-2mm\hbox{$\textstyle \Ome \atop \scriptstyle {m\times 3n}$} \space}}
\def\Tsize{\hbox{\space \raise-2mm\hbox{$\textstyle \T \atop \scriptstyle {3n\times n}$} \space}}
\def\Bsize{\hbox{\space \raise-2mm\hbox{$\textstyle \B \atop \scriptstyle {m\times n}$} \space}}

\newcommand{\tabincell}[2]{\begin{tabular}{@{}#1@{}}#2\end{tabular}}

\begin{document}

\title{S-PowerGraph: Streaming Graph Partitioning for Natural Graphs by Vertex-Cut}

\numberofauthors{3} %
\author{
\alignauthor
Cong Xie\\
       \affaddr{Department of Computer Science and Engineering\\
       Shanghai Jiao Tong University \\
        800 Dong Chuan Road \\
        Shanghai, China 200240}\\
       \email{xcgoner1108@gmail.com}
\alignauthor
Wu-Jun Li\\
       \affaddr{
		National Key Laboratory for Novel Software Technology\\
        Department of Computer Science and Technology\\
		Nanjing University}\\
       \email{liwujun@nju.edu.cn} \alignauthor
Zhihua Zhang\\
       \affaddr{Department of Computer Science and Engineering \\
        Shanghai Jiao Tong University \\
        800 Dong Chuan Road \\
         Shanghai, China 200240}\\
       \email{zhang-zh@cs.sjtu.edu.cn}
}

\maketitle
\begin{abstract}
One standard solution for analyzing large natural graphs is to adopt distributed computation on clusters. In distributed computation, graph partitioning~(GP) methods assign the vertices or edges of a graph to different machines in a balanced way so that some distributed algorithms can be adapted for. Most of traditional GP methods are \emph{offline}, which means that the whole graph has been observed before partitioning. However, the offline methods often incur high computation cost. Hence, \emph{streaming graph partitioning}~(SGP) methods, which can partition graphs in an online way, have recently attracted great attention in distributed computation. There exist two typical GP strategies: edge-cut and vertex-cut. Most SGP methods adopt edge-cut, but few vertex-cut methods have been proposed for SGP. However, the vertex-cut strategy would be a better choice than the edge-cut strategy because the degree of a natural graph  in general follows a highly skewed power-law  distribution. Thus, we propose a novel method, called \emph{S-PowerGraph}, for SGP of natural graphs by vertex-cut. Our S-PowerGraph method is simple but effective. Experiments on several large natural graphs and synthetic graphs show that our S-PowerGraph can outperform the state-of-the-art baselines.
\end{abstract}

%
\category{A.3}{Distributed computing-cloud, map-reduce, MPI, others}{Big data}
\category{7}{Data streams}{}
\category{13}{Graph mining}{}

\terms{Algorithms, Experimentation}

\keywords{Graph Partitioning, Streaming, Vertex-Cut} 


\section{Introduction}

In this big data era, data mining and machine learning with massive datasets have captured more and more attentions from research community and industry. In particular, recent years have witnessed the emergence of large natural graphs in many applications like Twitter and Facebook. It is very difficult for a single commodity machine to handle such kinds of large-scale graphs. Hence, more and more works try to adopt parallel~(distributed) computation on clusters for tasks with large graphs. To complete distributed computation, we first need to partition the whole graph across the cluster. Hence, \emph{graph partitioning}~(GP)~\cite{stanton2012streaming,gonzalez2012powergraph,jain2013graphbuilder,stanton2014SODA} has become one of the hot topics in graph-related research.

In distributed computation with clusters, the inter-machine communication cost is typically high. Hence, the goal of GP is to minimize the inter-machine~(cross partition) communication, and simultaneously keep the number of vertices~(or edges) in every partition approximately balanced~\cite{stanton2012streaming}. There exist two representative GP strategies: one is edge-cut and
the other is vertex-cut. Edge-cut~\cite{stanton2012streaming,DBLP:conf/kdd/NishimuraU13,tsourakakis2012fennel} divides the graphs by cutting the edges and assigns the vertices to different partitions. To achieve the goal of GP, edge-cut tries to minimize the cross-partition edges and keep the number of vertices in every partition balanced. On the contrary, vertex-cut~\cite{gonzalez2012powergraph,jain2013graphbuilder} divides the graphs by cutting the vertices and assigns the edges to different partitions. To achieve the goal of GP, vertex-cut tries to minimize the number of replications of vertices and keep the number of edges in every partition balanced. Figure~\ref{fig:cut} illustrates the difference between these two strategies. In Figure~\ref{fig:cut}~(a), the edges $(\textsf{A,C})$ and $(\textsf{A,E})$ are cut. The two partitions are node sets $\{\textsf{A,B,D}\}$ and $\{\textsf{C, E}\}$, respectively. In Figure~\ref{fig:cut}~(b), the node $\textsf{A}$ is cut. The two partitions are edge sets $\{(\textsf{A,B}),(\textsf{A,D})\}$ and $\{(\textsf{A,C}),(\textsf{A,E})\}$, respectively. It is easy to see that there exists essential difference between these two strategies.
\begin{figure}[htbp]
\centering
\subfigure[Edge-Cut]{\includegraphics[width=0.23\textwidth]{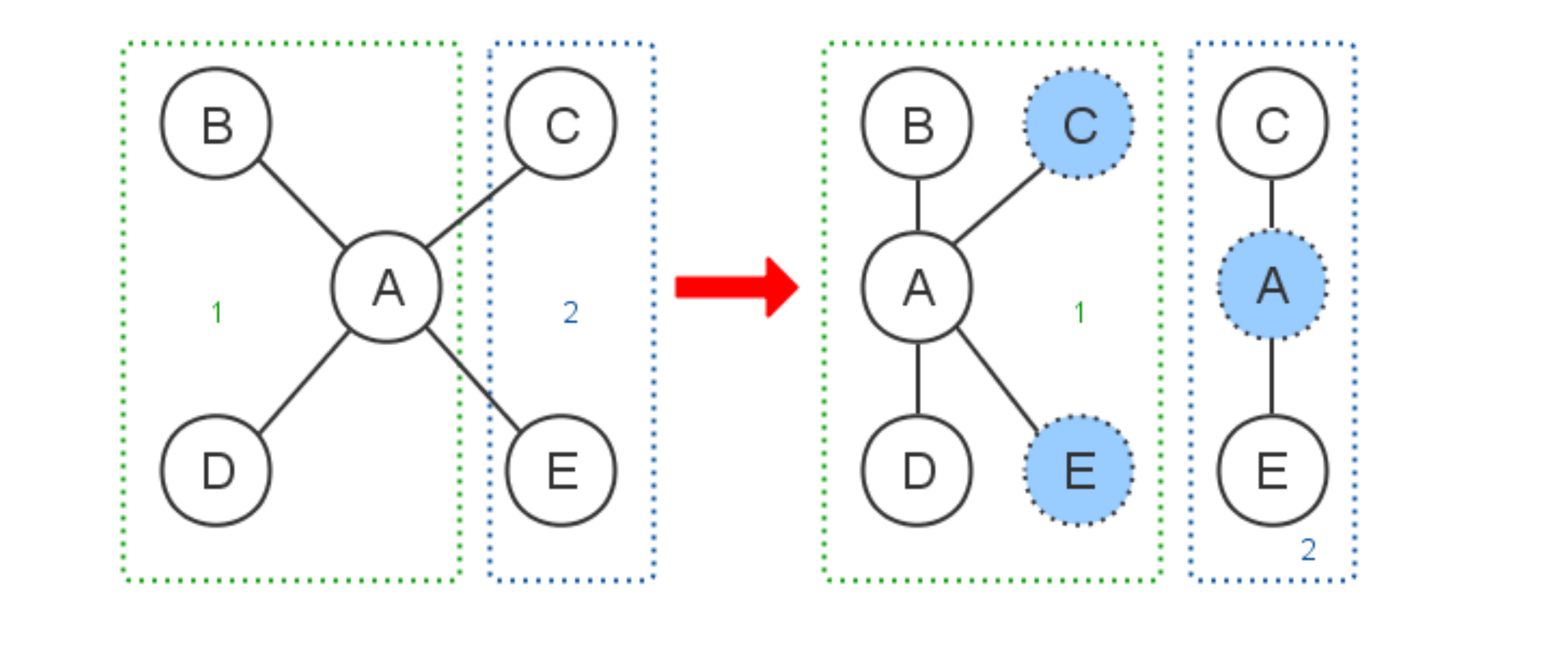}}
\subfigure[Vertex-Cut]{\includegraphics[width=0.23\textwidth]{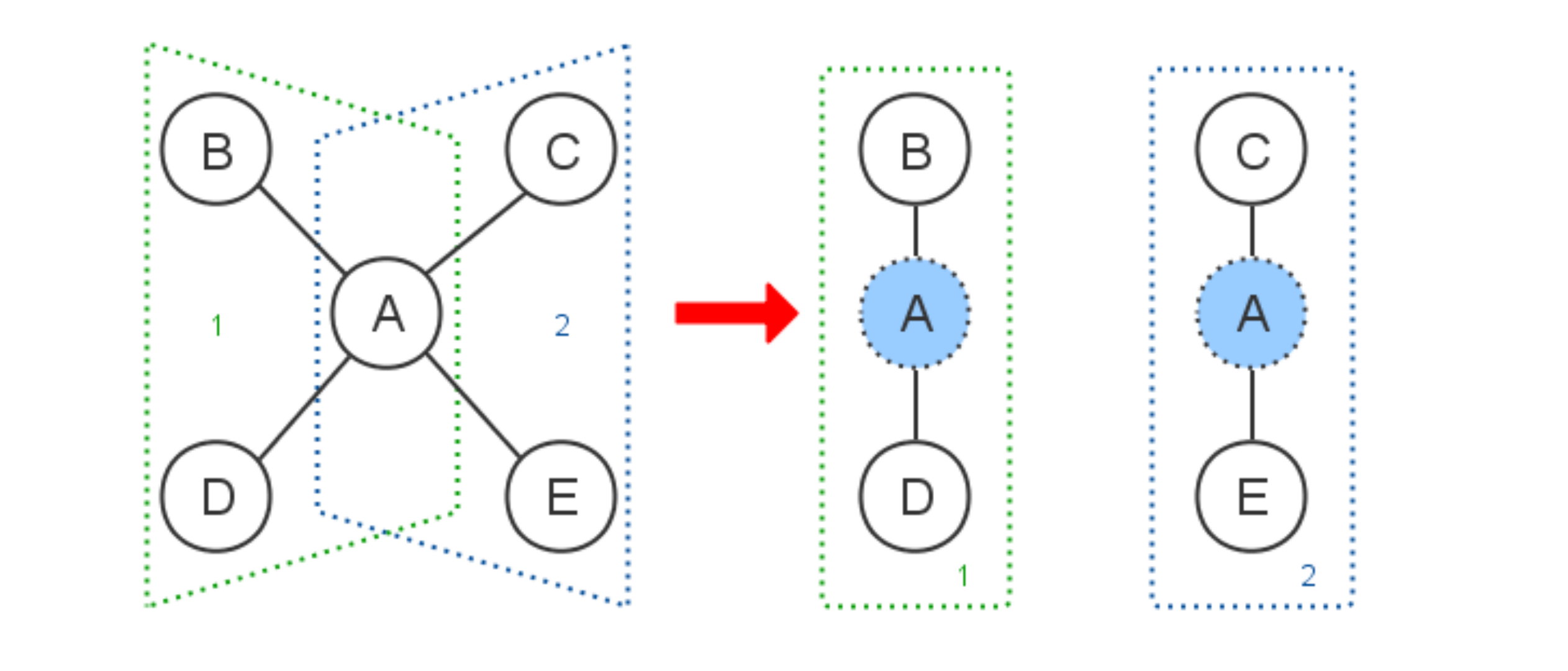}}
\vskip -0.2cm
\caption{\small Two strategies for graph partitioning. Shaded vertices are ghosts and mirrors respectively.}
\label{fig:cut}
\end{figure}

Most of the traditional GP methods are \emph{offline}~\cite{karypis1995metis}, which means that the whole graph has been observed before partitioning. However, the offline methods often incur high computation cost and hence can not scale to big data applications. Furthermore, the system often need to serially read the graph data from some disks onto a cluster or dynamically crawl the graph from some web sites. Hence, \emph{streaming graph partitioning}~(SGP) methods~\cite{stanton2012streaming,jain2013graphbuilder,DBLP:conf/kdd/NishimuraU13,tsourakakis2012fennel}, which can partition graphs in an online way, have recently attracted more and more attention from researchers. Representative methods include linear deterministic greedy~(LDG)~\cite{stanton2012streaming} and FENNEL~\cite{tsourakakis2012fennel}. However, most existing SGP methods adopt the edge-cut strategy for partitioning.

Actually, many existing offline~(non-streaming) works, such as PowerGraph~\cite{gonzalez2012powergraph}, have found that the vertex-cut strategy can outperform edge-cut in applications with natural graphs which in general have highly skewed power-law degree distributions. Even in the edge-cut based SGP work~\cite{stanton2012streaming}, the authors admitted that the \emph{edge-balanced partition}, which can be achieved by vertex-cut, may be more preferable than edge-cut for power-law distributed graphs. However, few vertex-cut methods have been proposed to solve the SGP problem.


In this paper, a novel method called \emph{S-PowerGraph}\footnote{The \emph{S} in \emph{S-PowerGraph} comes from the term \emph{streaming}.} is proposed for SGP of natural graphs by vertex-cut.  The main contributions of this paper are briefly outlined as follows:
\begin{itemize}
\item This is the first work to systematically study and compare different heuristics for SGP of natural graphs by vertex-cut.
\item Although GraphBuilder~\cite{jain2013graphbuilder} and PowerGraph~\cite{gonzalez2012powergraph} are not motivated by SGP and hence cannot be directly used for SGP, we propose methods to adapt them for SGP.
\item  S-PowerGraph can be viewed as an improved version of the \emph{adapted PowerGraph}, by making use of the property of power-law degree distribution for vertex-cut based \mbox{SGP}.
\item Experiments on several large natural graphs and synthetic graphs show that our S-PowerGraph can outperform other baselines, including the adapted \mbox{GraphBuilder} and \mbox{PowerGraph}, to achieve the state-of-the-art performance in real applications.
\end{itemize}


\section{Problem Formulation}


Graph $G$ is denoted by \(G = (V, E)\) where \(V = \{v_1, v_2, \ldots, v_n\}\) is a set of vertices and  \(E \subseteq V \times V\) denotes a set of edges. This implies that the number of vertices $|V|=n$, i.e., the cardinality of $V$ is $n$.
We say that $v_i$ and $v_j$ are neighbors if \((v_i, v_j) \in E\).
Typically, many graph mining and learning algorithms can be expressed by computations on vertices of a graph, and the edges of such a graph represents dependency between the data samples stored on the vertices~\cite{low2010graphlab,gonzalez2012powergraph}.

The goal of GP is to \emph{balance} the parallel work on each machine and \emph{minimize} the communication cost among different machines.
In the following, we will briefly introduce two important GP strategies: edge-cut and vertex-cut. Furthermore, we will also discuss the SGP problem based on vertex-cut.

\subsection{GP by Edge-Cut}

The traditional GP problem is to balance the number of vertices in each partition and minimize the edges across different partitions, which is known as edge-cut based GP~\cite{karypis1995metis,stanton2012streaming,DBLP:conf/kdd/NishimuraU13,stanton2014SODA,tsourakakis2012fennel}. More specifically, the problem is defined as follows:
\begin{align*}
& \min \limits_{A} \; \big\vert\{e \mid e = (v_i, v_j) \in E, v_i \in V_x, v_j \in V_y, x \neq y\} \big\vert \\
& s.t. \; \frac{\max \limits_{i} \vert V_i \vert}{\frac{1}{p} \sum_{i=1}^p \vert V_i \vert} \leq 1 + \xi,
\end{align*}
where $|\cdot |$ denotes the cardinality of a set, $\xi$ is a constant with $0 \leq \xi \leq 1 $, $c$ is the number of partitions,
$A =\{V_1, V_2, \ldots, V_{p}\}$ is a partition of $V$, i.e., $V_1 \cup \cdots \cup V_{p} =V$ and $V_x \cap V_y=\emptyset $ for $x\neq y$.

The objective function in the above formulation tries to minimize the number of edges across different partitions, and the constraint makes the partitions as balanced as possible. The demonstration of edge-cut is shown in Figure~\ref{fig:cut}~(a). Representative methods include the offline method Metis~\cite{karypis1995metis} and the streaming methods LDG~\cite{stanton2012streaming} and FENNEL~\cite{tsourakakis2012fennel}.

\subsection{GP by Vertex-Cut }

The edge-cut methods are efficient when partitioning sparse graphs, but they are not suitable when adapted for scale-free graphs.
Typically, natural graphs are not only large but also skewed.
The degree of the graph usually follows the power-law degree distribution:
\[
P(d) \propto d^{-\alpha},
\]
where $P(d)$ is the probability that a vertex has degree $d$, \(\alpha\) is a positive constant. For example, $ \alpha \approx 2.2 $ in the twitter-2010 social network~\cite{kwak2010twitter}. This skewed degree distribution might result in an imbalanced partitioning and challenge the traditional edge-cut methods.

PowerGraph~\cite{gonzalez2012powergraph} defined a balanced $p$-way vertex-cut based graph partitioning model. Let each edge $e \in E$ be assigned to a machine $M(e) \in \{1, \ldots, p\}$.
Then each vertex spans a set of different partitions $ A(v) \subseteq \{1, \ldots, p\} $. Thus, $\vert A(v) \vert$ is the number of replications of $v$ among different machines. The balanced vertex-cut based partitioning model is defined as follows:
\begin{align*}
& \min \limits_A \frac{1}{\vert V \vert} \sum \limits_{v \in V} \vert A(v) \vert
\\
& s.t. \; \max \limits_m \vert \{ e \in E \mid M(e) = m \} \vert < \lambda\frac{\vert E \vert}{p},
\end{align*}
where $\lambda \geq 1$ is an imbalance factor.

In vertex-cut, the number of edges on each machine is used to estimate the computation cost of that machine, and the number of the replications of the vertices is used to estimate the communication cost. The objective is still to balance the workload among machines and minimize the communication cost.




\subsection{SGP by Vertex-Cut}


The problem of SGP by vertex-cut tries to optimize the following conditional expectation:
\begin{align} \label{eqn:stream1}
&\argmin \limits_k \; \mathbb{E} \Big[ \sum \limits_{v \in V} \vert A(v) \vert \Bigg\vert M_i, M(e_{i+1}) = k \Big], \\
& s.t. \; \max \limits_m \vert \{ e \in E \mid M(e) = m \} \vert < \lambda\frac{\vert E \vert}{p}, \nonumber
\end{align}
where $\mathbb{E}$ denotes the expectation operation, $k$ is the index number of a partition, \(M_i\) is the assignment of the previous \(i\) edges, $e_{i+1}$ is the current edge in the stream.


\section{Methodology}
\label{sec:proposal}



We consider a streaming model in which edges arrive in a stream and each edge appears only once in the stream.
The vertices also arrive in a stream order just like what a web crawler does. Moreover, each time a vertex arrives together with some of its edges.


For SGP, the order that the vertex arrives is very important because it can influence the performance of the algorithms significantly~\cite{stanton2012streaming,tsourakakis2012fennel}. In our streaming setting, the vertices arrive in a particular order with all of their out-edges arriving in a random order. As stated in existing works~\cite{stanton2012streaming,tsourakakis2012fennel}, there are three typical orders in which the vertices arrive:
\begin{itemize}
\item \textbf{Random~(Rnd)}: The vertices simply arrive in a random order or in an order given by a random permutation of the vertices.
\item \textbf{Breadth-first search~(BFS)}: A starting vertex is selected from the graph and
new vertices arrive if they are found by a breadth-first search.
\item \textbf{Depth-first search~(DFS)}: This order is just the same as the BFS ordering except that depth-first search will be used.
\end{itemize}

\subsection{Model}

We observe that the replication factor, which is defined as the total number of replications divided by the total number of vertices, will be smaller if we cut more vertices with high degree. That is because when the degrees of vertices are under the power-law distribution, there exist many vertices with small degree and a relatively small amount of vertices are dense. So if  choices are available, we always choose to cut a vertex with relatively higher degree. Then we can simply use this strategy to improve the naive random (hashed) vertex-cut.

Relative to the model in (\ref{eqn:stream1}), we add a   constraint to guarantee if either vertex of the edge can be cut, the one with higher degree will be chosen. 
In particular, we defined the following model for streaming partitioning:
\begin{equation} \label{eqn:stream2}
\argmin \limits_{k} \; \mathbb{E} \Big[ \sum \limits_{v \in V} \vert A(v) \vert \Big\vert M_i, M(e_{i+1}) = k, k \in S \Big].
\end{equation}
Here $S$ is a set of partitions which the vertex with higher degree of $e_{i+1}$ belongs to. If neither of the two vertices of $e_{i+1}$ belongs to any partition, we let $S$  simply contain all the $k$ partitions. Thus, we define $S$ as:
\[
S =
\begin{cases}
   \{1,...,p\} &\mbox{if $A(u) = 0$ or $A(v) = 0$}\\
   A(\hat{w}) &\mbox{otherwise},
\end{cases}
\]
where
\[ \hat{w} =\argmin\limits_{w}  D(w) \; \mbox{ s.t. } \;  w \in \{u, v\} \; \mbox{ for } \;  (u,v) = e_{i+1}.
\]
Here $D(u)$ is the degree of vertex $u$. In this paper, we simply consider the graph undirected. For particular applications, $D(u)$ can be the degree for undirected graph or directed graph, and it can further be either in-degree or out-degree of $u$ if the graph is directed.

Additionally, since we do the partitioning job in a real on-line condition, the degree (in-degree and out-degree) of each vertex is not known in advance. The degree distribution is unveiled gradually when the edges of the graph arrive in stream. So we need to find a way to estimate the degree distribution of the whole graph.

We use model in~(\ref{eqn:stream2}) and some kind of estimation of degree distribution to build the algorithm \textit{Random-Degree}, \textit{Degree} and \textit{DegreeIO} in the next section.

\subsection{Algorithms}

%
%
%
%
%

In this paper, we examine multiple partitioning algorithms. We define each of them as follows.

Let $P(k)$ denote the current set of edges belong to the $k$th partition, and let $[p]=\{1, \ldots, p\}$. 
We define
\[
maxedges = \max \limits_{k \in [p]} \{ \vert P(k) \vert \} \; \mbox{ and } \;  minedges = \min \limits_{k \in [p]} \{ \vert P(k) \vert \}.
\]
We refer to a certain partition by its index $idx$. Then the $idx$th partition is denoted by $P_{idx}$. For each algorithm, and index $idx$ will be decided by different heuristic and each edge will be assigned to the $idx$th partition. For the greedy algorithms, when assigning an edge, each algorithm will first compute a $Score$ for each partition. It is the partition with the highest $Score$ that the edge will finally be assigned to.

The first three algorithms are adapted from some related works.

\textbf{Random}
The random method simply assigns each edge via a random hash function as implemented in PowerGraph. So it is the most naive algorithm.
The edge $e$ is assigned to $P_{idx}$ where $idx$ is decided by:
\[
idx = hash(e)
\]
$hash(e)$ here is a randomized hash function.

\textbf{Grid}
GraphBuilder~\cite{jain2013graphbuilder} provides a  Grid-based Constrained Random Vertex-cut algorithm, which can also be adapted for streaming graph partitioning. In this Grid partitioning, a vertex is mapped into a certain shard which contains a constrained set of partitions. Then, the assignment will be chosen from the constrained set randomly. The constraint on the choice of partitions produces less replications than the naive randomized partitioning.\\
The edge $e$ is assigned to $P_{idx}$ where $idx$ is decided by:
\[
idx = Grid\_hash(e)
\]
The hash function $Grid\_hash(e)$ here is the same one implemented by GraphBuilder.

\textbf{Balance~(i.e., PowerGraph)} - An enhanced version of the greedy algorithm proposed in PowerGraph \cite{gonzalez2012powergraph}. However some particular orders may result in relatively imbalance partitioning for the algorithm introduced by PowerGraph. So in practice the proposed greedy algorithm cannot be used as a baseline. We instead need an algorithm to enhances the constraint of imbalance which turns out to be the algorithm \textit{Balance(PowerGraph)}. A new constraint is added to avoid imbalance. This modified version  plays a role as baseline which demonstrates the performance of the PowerGraph partitioning.
\begin{align*}
 Score(k) &=  \mathbf{1} \{ k \in A(u) \} + \mathbf{1} \{ k \in A(v) \} + balance(k), \\
balance(k) &= \dfrac{maxedges - \vert P(k) \vert}{maxedges - minedges + 1}, \\
(u,v) &= e, \\
k &\in  [p].
\end{align*}
where $\mathbf{1}\{\#\}=1$ if $\#$ is true and  $\mathbf{1}\{\#\}=0$ otherwise.

If $ \frac{\max \limits_{i} \vert p_i \vert}{avg \vert p_i \vert} \geq 1.1 $, then the edge $e$ is assigned to $P_{idx}$ where $idx$ is decided by:
\[
idx = \argmax \limits_{k \in [p]} \{ balance(k) \}
\]
Otherwise, the edge $e$ is assigned to $P_{idx}$ where $idx$ is decided by:
\[
idx = \argmax \limits_{k \in [p]} \{ Score(k) \}
\]

The following algorithms are proposed by us which are based on the model in (2).

\textbf{Random-Degree}
The edge $e$ is assigned to $P_{idx}$ where $idx$ is decided by:
\[
idx = hash(v_L)
\]
where
\[ v_L =\argmin\limits_{w}  D_{in}(w) \; \mbox{ s.t. } \;  w \in \{u, v\}
\]

\textbf{Degree} - Assign $e$ to a partition of maximal score, breaking ties randomly. The score of the $k$th partition is defined as follows:
\begin{align*}
Score_{in}(k) &=  \mathbf{1} \{ k \in A(u) \} + \mathbf{1} \{ k \in A(v) \} \\
 & \quad +  \mathbf{1} \{ k \in A(u), D_{in}(u) \leq D_{in}(v) \} \\
 & \quad +  \mathbf{1} \{ k \in A(v), D_{in}(v) \leq D_{in}(u) \} \\
 & \quad +  balance(k), \\
balance(k) &= \dfrac{maxedges - \vert P(k) \vert}{maxedges - minedges + 1}, \\
(u,v) &= e, \\
k &\in [p].
\end{align*}
If $ \frac{\max \limits_{i} \vert p_i \vert}{avg \vert p_i \vert} \geq 1.1 $, then the edge $e$ is assigned to $P_{idx}$ where $idx$ is decided by:
\[
idx = \argmax \limits_{k \in [p]} \{ balance(k) \}
\]
Otherwise, the edge $e$ is assigned to $P_{idx}$ where $idx$ is decided by:
\[
idx = \argmax \limits_{k \in [p]} \{ Score_{in} \}
\]

\textbf{DegreeIO} - The score of the $k$th partition is defined as follows:
\begin{align*}
Score_{in}(k) &=  \mathbf{1} \{ k \in A(u) \} + \mathbf{1} \{ k \in A(v) \} \\
& \quad +  \mathbf{1} \{ k \in A(u), D_{in}(u) \leq D_{in}(v) \} \\
& \quad +  \mathbf{1} \{ k \in A(v), D_{in}(v) \leq D_{in}(u) \} \\
& \quad +  balance(k) \\
Score(k) &=  \mathbf{1} \{ k \in A(u) \} + \mathbf{1} \{ k \in A(v) \} \\
 & \quad +  \mathbf{1} \{ k \in A(u), D(u) \leq D(v) \} \\
& \quad +  \mathbf{1} \{ k \in A(v), D(v) \leq D(u) \} \\
& \quad +  balance(k), \\
balance(k) &= \dfrac{maxedges - \vert P(k) \vert}{maxedges - minedges + 1}, \\
(u,v) &= e, \\
k  &\in [p].
\end{align*}
For the current edge $e$, we let $u$ be the source vertex and $v$ be the target vertex.
If $D_{out}(u) \geq p$ and $D_{out}(v)$ is unknown which means vertex $v$ has not arrived in the stream,
then we do not assign the edge immediately and store it in a buffer until vertex $v$ arrives.
If $D_{out}(v)$ is known, the edge $e$ is assigned to $P_{idx}$ where $idx$ is decided by:
\[
idx = \argmax \limits_{k \in [p]} \{ Score(k) \}
\]
Otherwise, the edge $e$ is assigned to $P_{idx}$ where $idx$ is decided by:
\[
idx = \argmax \limits_{k \in [p]} \{ Score_{in}(k) \}
\]


\subsection{Theoretical Analysis}
Different algorithms have different motivations.

The first three algorithms are from related works and relatively need less computation and memory.

For all the greedy algorithms \textit{Balance(Powergraph)}, \textit{Degree} and \textit{DegreeIO}, when the graph is not very skewed, \textit{Balance} is a good choice which requires less computation and memory.

\textit{Random-Degree} demonstrates the main idea of our model: decisions of which partition to be assigned to is always according to the vertex with lower degree. Although this algorithm cannot be used in the stream setting because the degree of each vertex is unknown in a stream,
a theoretical analysis is available to show how our idea reduce the replication factor.

\newtheorem{theorem}{Theorem}
\begin{theorem}
A random-degree vertex-cut on $p$ machines has the expected replication factor:
\begin{align*}
&\mathbb{E}\left[ \frac{1}{\vert V \vert}\sum\limits_{v \in V} \vert A(v) \vert \right] \\
& = 1 + \frac{(p-1)}{\vert V \vert} \sum\limits_{v \in V} \left(1-(1-\frac{1}{p})^{(1-Ratio(v))D[v]}\right)
\end{align*}
Here $D(v)$ is the degree of vertex v. And all edges are treated as undirected. $Ratio(v)$ is a ratio of degree of each vertex.
\end{theorem}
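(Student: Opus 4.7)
The plan is to compute $\mathbb{E}[|A(v)|]$ separately for each vertex $v$ and then combine via linearity of expectation. Writing
\[
|A(v)| = \sum_{k=1}^{p} \mathbf{1}\{v \mbox{ is replicated in partition } k\},
\]
the problem reduces to evaluating $\mathbb{P}[v \in \mbox{partition } k]$ for each $k$, from which the final claim follows by summing over $k$ and averaging over $v \in V$.

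Next I would unpack the Random-Degree rule. Each edge incident to $v$ is assigned to the partition $hash(v_L)$, where $v_L$ is the lower-degree endpoint of the edge. From $v$'s viewpoint its $D(v)$ incident edges split into two groups: the $Ratio(v) \cdot D(v)$ edges on which $v$ itself is the lower-degree endpoint are all deterministically funnelled into the single partition $hash(v)$, while each of the remaining $(1-Ratio(v))D(v)$ edges is sent to an independent uniform partition via its neighbor's hash. Hence $v$ is certainly present in the partition $hash(v)$, and for any other partition $k$,
\[
\mathbb{P}[v \notin k \mid hash(v) \neq k] = \left(1 - \frac{1}{p}\right)^{(1-Ratio(v))D(v)},
\]
by independence of the neighbor hashes.

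Summing $\mathbb{P}[v \in k]$ over $k \in [p]$, with one deterministic contribution of $1$ from $k = hash(v)$ and $p-1$ symmetric contributions from the remaining partitions, yields
\[
\mathbb{E}[|A(v)|] = 1 + (p-1)\left(1 - \left(1 - \frac{1}{p}\right)^{(1-Ratio(v))D(v)}\right),
\]
and averaging over $v \in V$ then gives the claimed formula. An equivalent, slightly cleaner way to reach the same display is to view the assignments as a balls-into-bins experiment: the edges incident to $v$ produce $1 + (1-Ratio(v))D(v)$ effectively independent uniform ``balls'' in $[p]$ (the self-hashed edges collapse to a single ball $hash(v)$), and the expected number of nonempty bins is $p\bigl(1-(1-1/p)^{1+(1-Ratio(v))D(v)}\bigr)$, which algebraically equals the right-hand side above.

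The main obstacle is justifying the independence of the hash values of distinct neighbors: one needs a random-oracle-style model for $hash(\cdot)$ (or $k$-wise independence strong enough for this expectation), and I would flag this assumption explicitly. A minor boundary subtlety arises when $Ratio(v)=0$: then $v$ is not forced into $hash(v)$, and a naive count gives $p\bigl(1-(1-1/p)^{D(v)}\bigr)$ rather than the displayed expression. The stated formula remains valid under the PowerGraph-style convention that every vertex retains a default home partition, which is what supplies the constant ``$1$'' term in the sum and lets a single expression cover both cases.
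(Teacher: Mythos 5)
Your proof is correct and follows essentially the same route as the paper's: the $Ratio(v)\,D(v)$ self-hashed edges pin $v$ to the single partition $hash(v)$, each remaining $(1-Ratio(v))D(v)$ edge lands in an independent uniform partition, so each of the other $p-1$ partitions is occupied with probability $1-(1-\frac{1}{p})^{(1-Ratio(v))D(v)}$, and linearity plus averaging over $v\in V$ gives the stated formula. Your explicit caveats about hash independence and the $Ratio(v)=0$ boundary case are details the paper's proof leaves implicit, but they do not alter the argument.
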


\begin{proof}
Our degree-random vertex-cut algorithm:
for edge $e(u, v)$, $v_L = arg\min\limits_w \{D(w) \vert w \in \{u,v\} \}$ which is the vertex with lower degree.
Then we assign the edge by some kind of hash function.
So we have the assignment of $e$:
\[
A(e) = hash(v_L),
\]

Denote the indicator $P_i$ as the event that vertex $v$ has at least one of $S(v)$ in the $i$th partition where $S(v)$ is a subset of the adjacent edges of vertex $v$. Then follow the above procedure, the expectation $\mathbb{E}[P_i]$ is:
\[
\mathbb{E}[P_i] = 1-(1-\frac{1}{p})^{D_s[v]}
\]
where $D_s(v)$ is the number of edges in $S(v)$.

We denote the fraction of the adjacent edges of $v$ that satisfy $D(v) \leq D(u)$ as $Ratio(v)$ where $u$ is the another vertex of each adjacent edge $e(u,v)$.

Intuitively, if we assume $P(d) = Cd^{-\alpha}$ I have:
\[
1-Ratio(v) = P\left( d \leq D(v) \right) = \sum\limits_{d=1}^{D(v)}P(d) = C\sum\limits_{d=1}^{D(v)}d^{-\alpha}
\]

For some vertex $v$, $Ratio(v)$ of its adjacent edges will be hashed to one certain partition, and the residual will be randomly hashed to all the $p$ partitions.
So for the $Ratio(v)$ of the adjacent edges, the number of replication of $v$ is simply 1.

And for the residual $1-Ratio(v)$ of the adjacent edges, we have $\mathbb{E}[P_i] = 1-(1-\frac{1}{p})^{(1-Ratio(v))D[v]}$. Here $P_i$ will the other $p-1$ partitions than the one that already have a replication.

Put the two parts together, then we can have:
\begin{align*}
&\mathbb{E} \left[ \vert A(v) \vert \right] \\
& = 1 + \sum\limits_{i=1}^{p-1}\mathbb{E}[P_i] \\
& = 1 + (p-1)\left(1-(1-\frac{1}{p})^{(1-Ratio(v))D[v]}\right)
\end{align*}

Thus the expected replication factor is:
\begin{align*}
&\mathbb{E}\left[ \frac{1}{\vert V \vert}\sum\limits_{v \in V} \vert A(v) \vert \right] \\
& = \frac{1}{\vert V \vert} \sum\limits_{v \in V} \left( 1 + (p-1)\left(1-(1-\frac{1}{p})^{(1-Ratio(v))D[v]}\right) \right) \\
& = 1 + \frac{(p-1)}{\vert V \vert} \sum\limits_{v \in V} \left(1-(1-\frac{1}{p})^{(1-Ratio(v))D[v]}\right)
\end{align*}
\end{proof}

The two novel greedy algorithms \textit{Degree} and \textit{DegreeIO} can be viewed as a greedy version of \textit{Random-Degree} in the stream setting.

\textit{Degree} is a greedy procedure which makes use of the in-degree distribution. The idea seems similar to that proposed in PowerLyra~\cite{chen2013institute} which acts as a component of the distributed framework. However, the motivations of both are different.
Unlike PowerLyra, we do not assume that the application adapted for the graph is known; i.e.,  we do not know which kind of edges (in-edges or out-edges) the computation involves in advance. Thus, we  treat each edge as undirected. But the current problem is that the out-degree of a vertex is unknown until it arrives with all its out-edges. Fortunately,  the degree distribution of the in-edges or the relationship of the degrees between vertices can be easily estimated by the vertices arrived according to our streaming model. This is why we only consider the in-degree in this step.

\textit{DegreeIO} aims to meet the challenge that both in-degree and out-degree of natural graphs are skewed and to further improve the vertex-cut. To solve the problem that the out-degree of some vertex is unknown temporarily, the edge is buffered until the out-degree is known. For the vertex whose out-degree is small, we simply reuse the \textit{Degree} step to reduce the memory use.

So if we are sure in advance that the in-degree of the graph is more skewed than the out-degree, \textit{Degree} is a better choice with less computation and memory use. Otherwise \textit{DegreeIO} is always a better choice with slightly more computation cost. The \emph{Degree} and \emph{DegreeIO} are two variants of our \mbox{S-PowerGraph} method.


\section{Experiment}
In this section we demonstrate the setup and results of our experiments as well as the discussion of the results. 

We use a server with Intel Xeon cores, 64GB of RAM.

\subsection{Datasets}
The collection of large graph datasets we use in the experiments consists of real-world graphs shown in Table~\ref{tab:realtable} and synthetic power-law graphs shown in Table~\ref{tab:synctable}. Some of the real-world graphs were selected to be the same as the experiment of PowerLyra\cite{chen2013institute}. And some additional real-world graphs were from the UF Sparse Matrices Collection\cite{davis2011university}. The synthetic power-law graphs were generated by a combination of two synthetic graphs. For one of them, the out-degree of each vertex is sampled randomly from a Zipf distribution\cite{adamic2002zipf} with parameter $\alpha$ and the in-degree of each vertex is under a uniform distribution. For the other one, the in-degree of each vertex is sampled randomly from a Zipf distribution with parameter $\beta$ and the out-degree of each vertex is under a uniform distribution. The power-law constant ($\alpha$ and $\beta$) of the synthetic graphs ranges from 1.9 to 2.2. The smaller the $\alpha$ is, the skewer the graph will be. However the different power-law constant result in different partitioning performance. And the relationship between the distribution of in-degree and out-degree also acts as an important role. To demonstrate such fact, we separate the synthetic graph collection into two parts: the one with parameter $\alpha$ no smaller than $\beta$ shown in Table~\ref{tab:synctable}~(a) and the one with parameter $\beta$ greater than $\alpha$ shown in Table~\ref{tab:synctable}~(b).
\begin{table}[htbp]
\caption{Real-world graphs}
\centering
\subtable{

\begin{tabular}{|l|l|r|r|} \hline
Alias	& Graph		& $\vert V \vert$	& $\vert E \vert$ \\ \hline
Tw	& Twitter\cite{kwak2010twitter}		& 42M				& 1.47B \\ \hline
It	& It-2004\cite{davis2011university}		& 41M				& 1.15B \\ \hline
UK	& UK-2005\cite{boldi2004webgraph}		& 40M				& 936M \\ \hline
Arab	& Arabic-2005\cite{davis2011university}		& 22M				& 0.6B \\ \hline
Wiki	& Wiki\cite{boldi2004webgraph}		& 5.7M				& 130M \\ \hline
LJ	& LiveJournal\cite{mislove2007measurement}	& 5.4M				& 79M \\ \hline
HW	& Hollywood-2009\cite{davis2011university}	& 1.1M				& 57M \\ \hline
In	& In-2004\cite{davis2011university}	& 1.4M				& 17M \\ \hline
As	& As-skitter\cite{davis2011university}	& 1.7M				& 11M \\ \hline
CP	& CoPapersCiteseer\cite{davis2011university}	& 0.4M				& 16M \\ \hline
WG	& WebGoogle\cite{leskovec2011stanford}	& 0.9M				& 5.1M \\ \hline
\end{tabular}
\label{tab:realtable}

}
\end{table}

\begin{table}[htbp]
\caption{Synthetic graphs with different power-law constants}
\centering
\subtable[Synthetic graphs with $\alpha \geq \beta$]{
\begin{tabular}[t]{|l|r|r|r|} \hline
Alias	& $\alpha$ & $\beta$ & \#Edges \\ \hline
S1	& 2.2 & 2.2 & 71,334,974 \\ \hline
S2	& 2.2 & 2.1 & 88,305,754 \\ \hline
S3	& 2.2 & 2.0 & 134,881,233 \\ \hline
S4	& 2.2 & 1.9 & 273,569,812 \\ \hline
S5	& 2.1 & 2.1 & 103,838,645 \\ \hline
S6	& 2.1 & 2.0 & 164,602,848 \\ \hline
S7	& 2.1 & 1.9 & 280,516,909 \\ \hline
S8	& 2.0 & 2.0 & 208,555,632 \\ \hline
S9	& 2.0 & 1.9 & 310,763,862 \\ \hline
\end{tabular}

}
\\
\subtable[Synthetic graphs with $\alpha < \beta$]{
\begin{tabular}[t]{|l|r|r|r|} \hline
Alias	& $\alpha$ & $\beta$ & \#Edges \\ \hline
S10	& 2.1 & 2.2 & 88,617,300 \\ \hline
S11	& 2.0 & 2.2 & 135,998,503 \\ \hline
S12	& 2.0 & 2.1 & 145,307,486 \\ \hline
S13	& 1.9 & 2.2 & 280,090,594 \\ \hline
S14	& 1.9 & 2.1 & 289,002,621 \\ \hline
S15	& 1.9 & 2.0 & 327,718,498 \\ \hline
\end{tabular}

}
\label{tab:synctable}
\end{table}

\subsection{Baseline and Evaluation Metric}

\subsubsection{Baseline}
Table~\ref{baselinetable} lists several representative graph partitioning methods and our S-PowerGraph method. Note that the \emph{Degree} and \emph{DegreeIO} are two variants of our S-PowerGraph. In our experiment, we take algorithm \textit{Random}, \textit{Grid}, and \textit{Balance(PowerGraph)} as the baseline. Note that these algorithms comes from PowerGraph and GraphBuilder listed in Table~\ref{baselinetable}. The reason why the rest of them are not considered is that the first two are of edge-cut which cannot be compared to the results of vertex-cut and the algorithm of PowerLyra cannot be used in the stream setting.
\begin{table*}[ht]
\caption{\small Some representative graph partitioning methods.}
\begin{center}
\begin{tabular}{|r|c|c|c|c|c|c|}
\hline
Method & LDG& FENNEL         & PowerGraph   & PowerLyra       &GraphBuilder                 & S-PowerGraph  \\ \hline \hline
Partition Strategy        & edge-cut  &edge-cut      & vertex-cut  & \tabincell{c}{hybrid edge-cut\\/vertex-cut}               & vertex-cut      & vertex-cut  \\ \hline
Balance Type           &vertex-balance  &vertex-balance & edge-balance & edge-balance                    & edge-balance  & edge-balance \\ \hline
Streaming/Online       & yes &yes            & no         & no                              & no  & yes          \\ \hline
\end{tabular}
\end{center}
\label{baselinetable}
\end{table*}

\subsubsection{Evaluation Metric}
The result of the experiments can be described by three factors: communication cost, imbalance and sensibility to number of partitions. We used the replication factor to define the communication cost between different partitions which was introduced by PowerGraph\cite{gonzalez2012powergraph}:
\[
\lambda = \frac{1}{\vert V \vert} \sum \limits_{v \in V} \vert A(v) \vert
\]
, where $V$ is the set of all the vertices, $\vert V \vert$ denotes the number of vertices and $\vert A(v) \vert$ denotes the number of mirrors (replications) of vertex $v$. We experiment the replication factor on all the real-world graphs and synthetic graphs.

To make the result more clear, wee also illustrate the improvement of our algorithms over the baseline. In all the three algorithms which act as the baseline, the algorithm \textit{Balance(PowerGraph)} enjoys the lowest replication factor defined above. So we compare the replication factor of our algorithms to that of \textit{Balance(PowerGraph)} and denote the improvement as:
\[
improvement = \frac{\lambda_{PowerGraph} - \lambda_{Algorithm}}{\lambda_{PowerGraph}} \times 100\%
\]
where $\lambda_{Algorithm}$ is denoted by the replication factor produced by a certain algorithm.

The second factor is the imbalance factor defined as:
\[
\rho =
\frac{\max \limits_m \vert \{ e \in E \mid A(e) = m \} \vert}{\vert E \vert / p}
\]
, where $\vert A(e) \vert$ denotes the partition that edge $e$ is placed in , $\vert E \vert$ is the number of edges of the graph and $p$ is the number of partitions. However, by using the vertex-cut strategy, the imbalance is no longer a big deal. So we just show the results of the real-world graphs.

Finally we tested the sensibility to the number of partitions of our algorithm. We used the twitter-2010 social network\cite{kwak2010twitter} and compared the replication factor by the increasing number of partitions. And we also show that the algorithm is scalable with the power-law constant by taking experiments on synthetic graphs with different power-law constant.

\subsection{Experimental Results}
In Figure~\ref{fig:real}~{(a),(b) and (c)}, we compared the replication factor on different real-world graphs in different orders by different algorithms. In all the experiments the number of partitions is 48. Our algorithm \textit{DegreeIO} produced lower replication factor in each experiment.
Figure~\ref{fig:real}~{(d),(e) and (f)} shows the improvement of the algorithms \textit{Degree} and \textit{DegreeIO} compared to \textit{Balance(PowerGraph)}.

\begin{figure*}[htb]
\centering
\subfigure[Rnd order]{\includegraphics[width=0.32\textwidth]{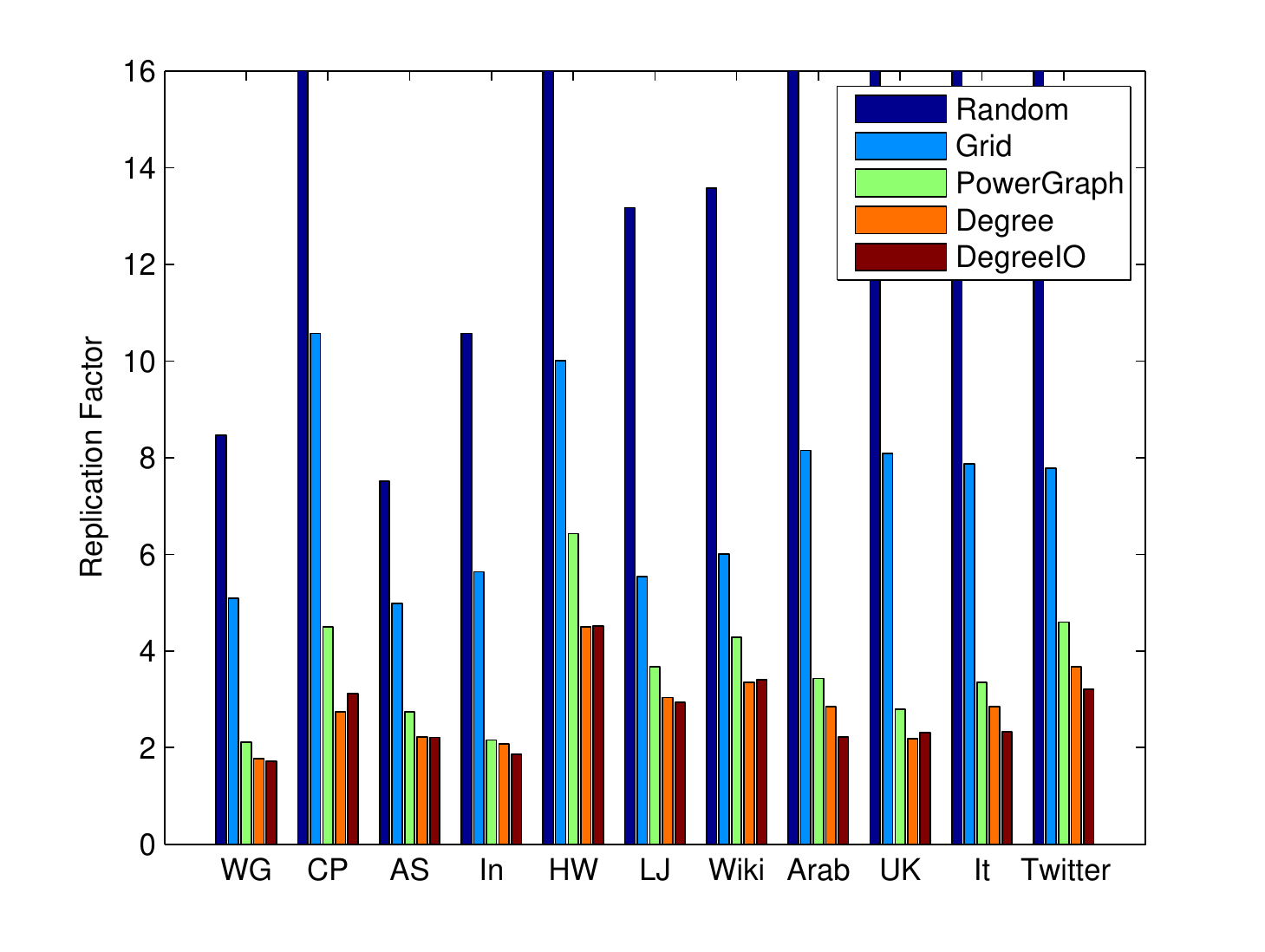}}
\subfigure[BFS order]{\includegraphics[width=0.32\textwidth]{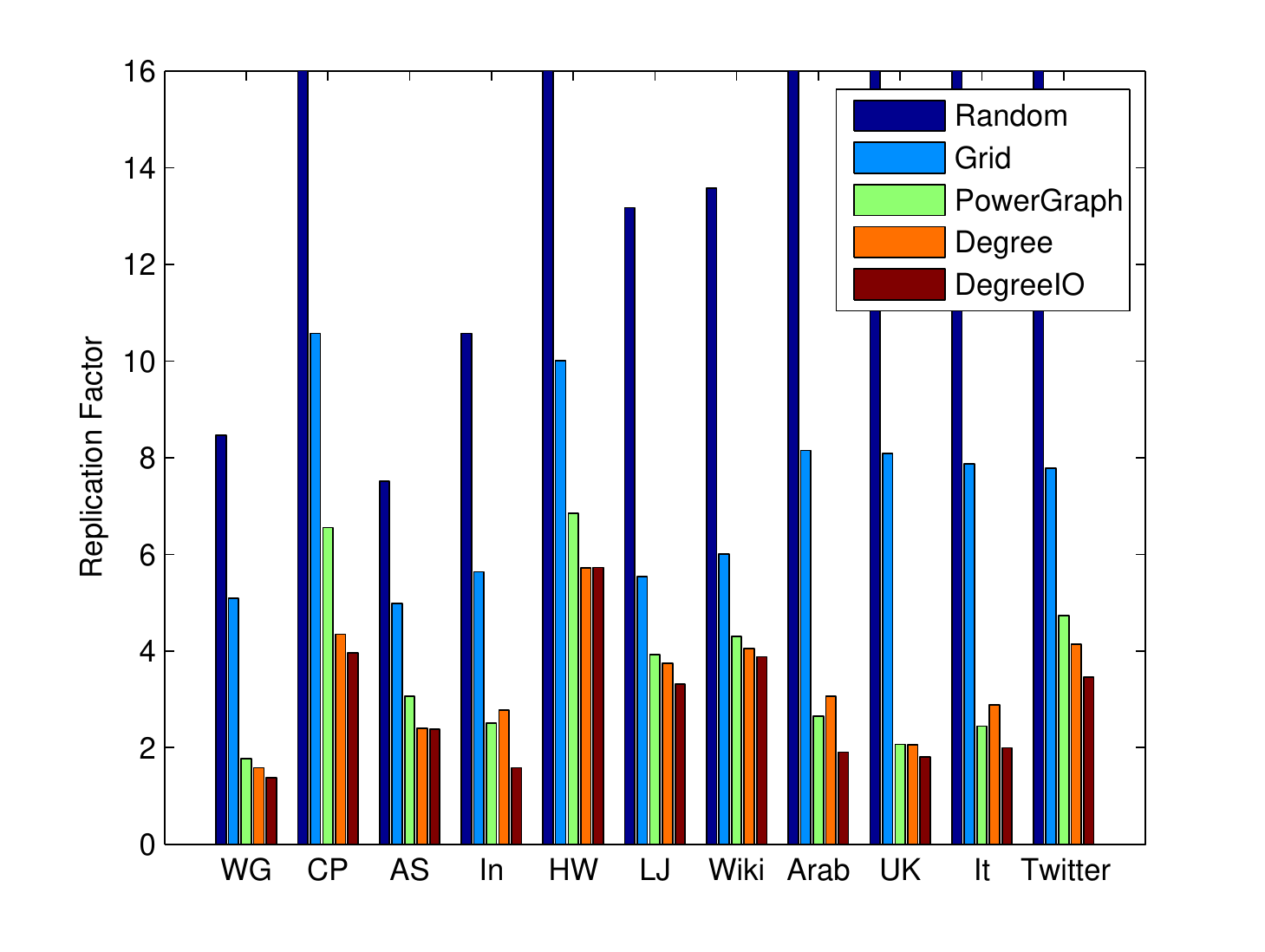}}
\subfigure[DFS order]{\includegraphics[width=0.32\textwidth]{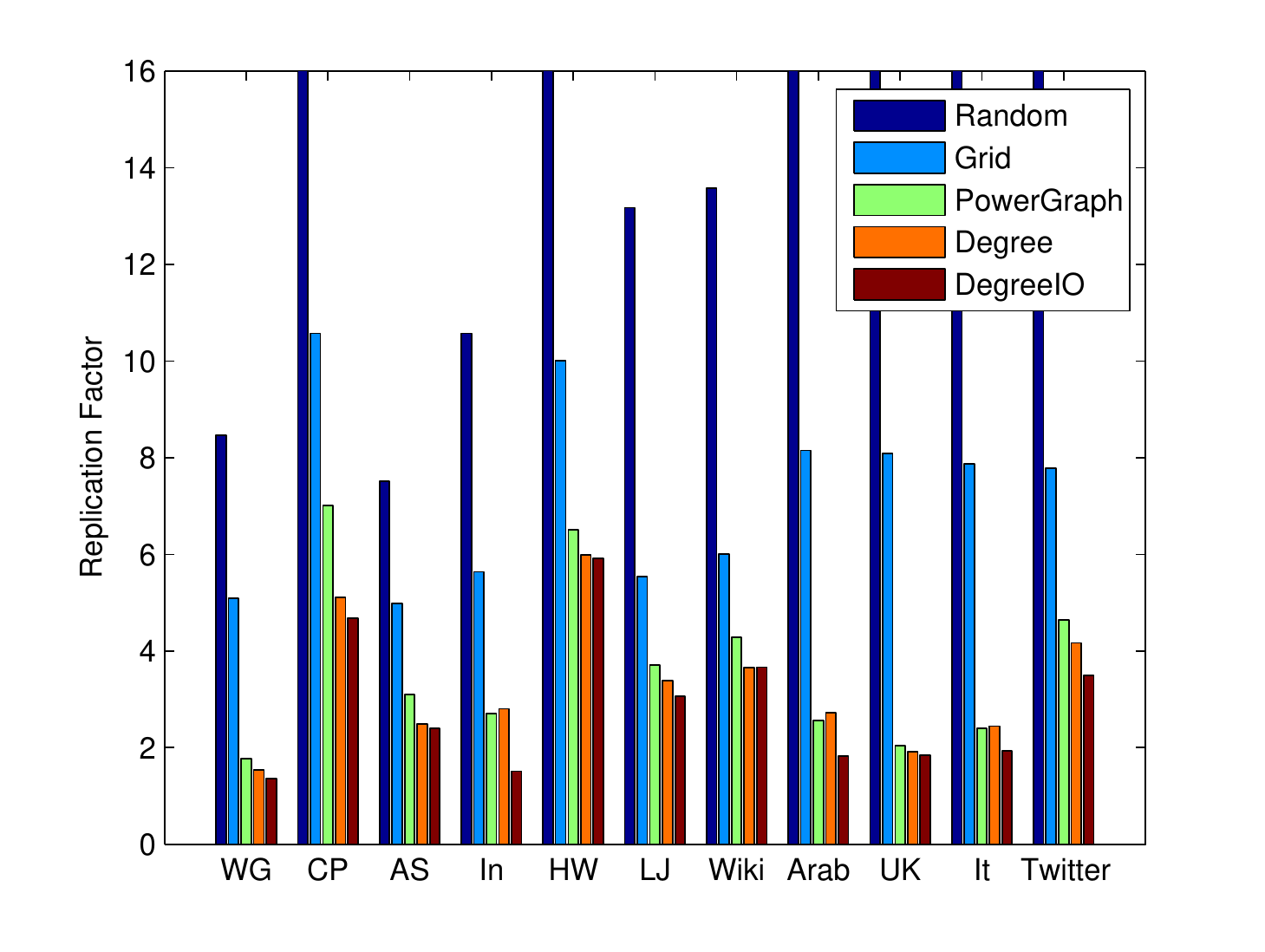}} \\
\subfigure[Rnd order]{\includegraphics[width=0.32\textwidth]{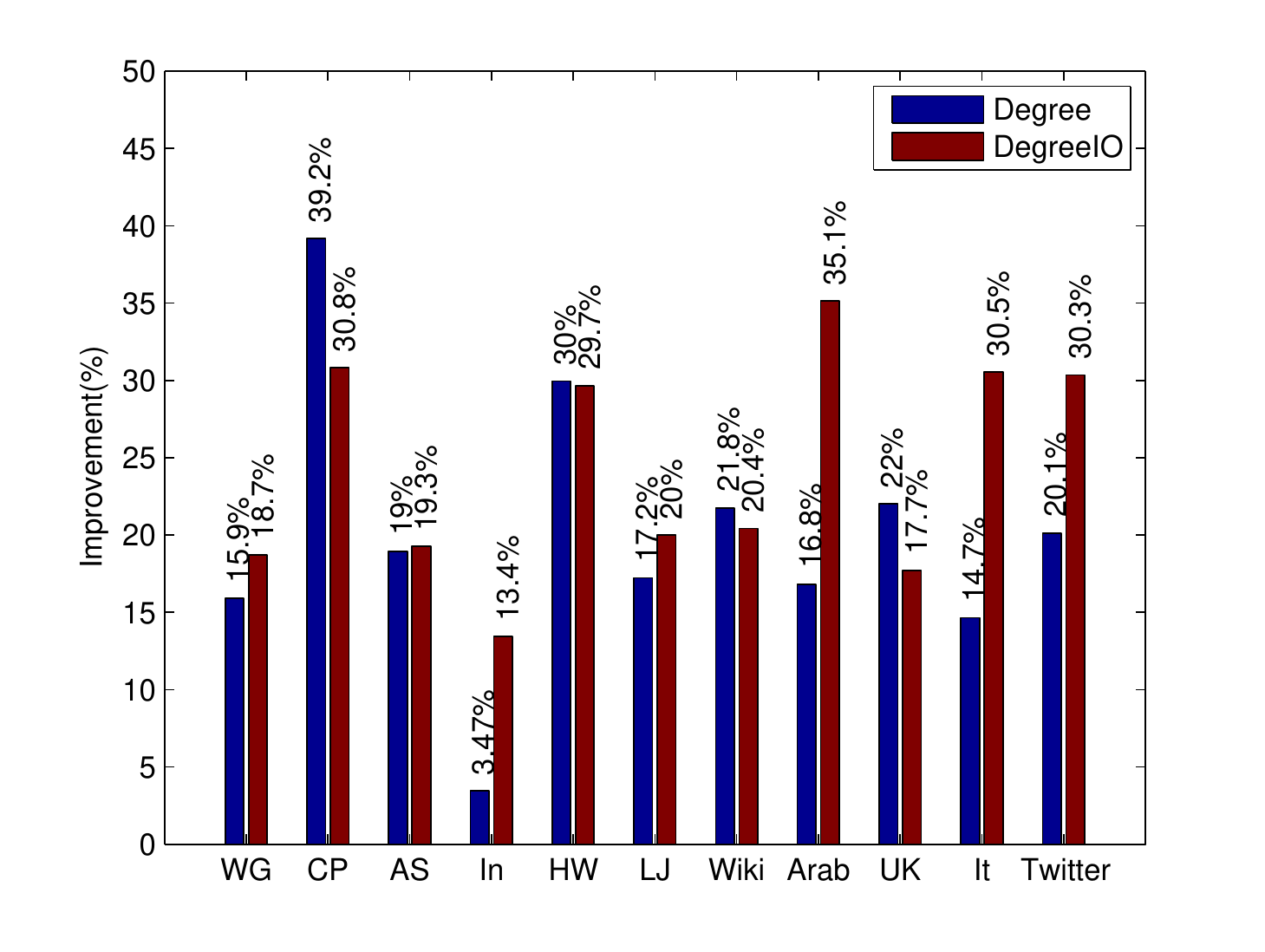}}
\subfigure[BFS order]{\includegraphics[width=0.32\textwidth]{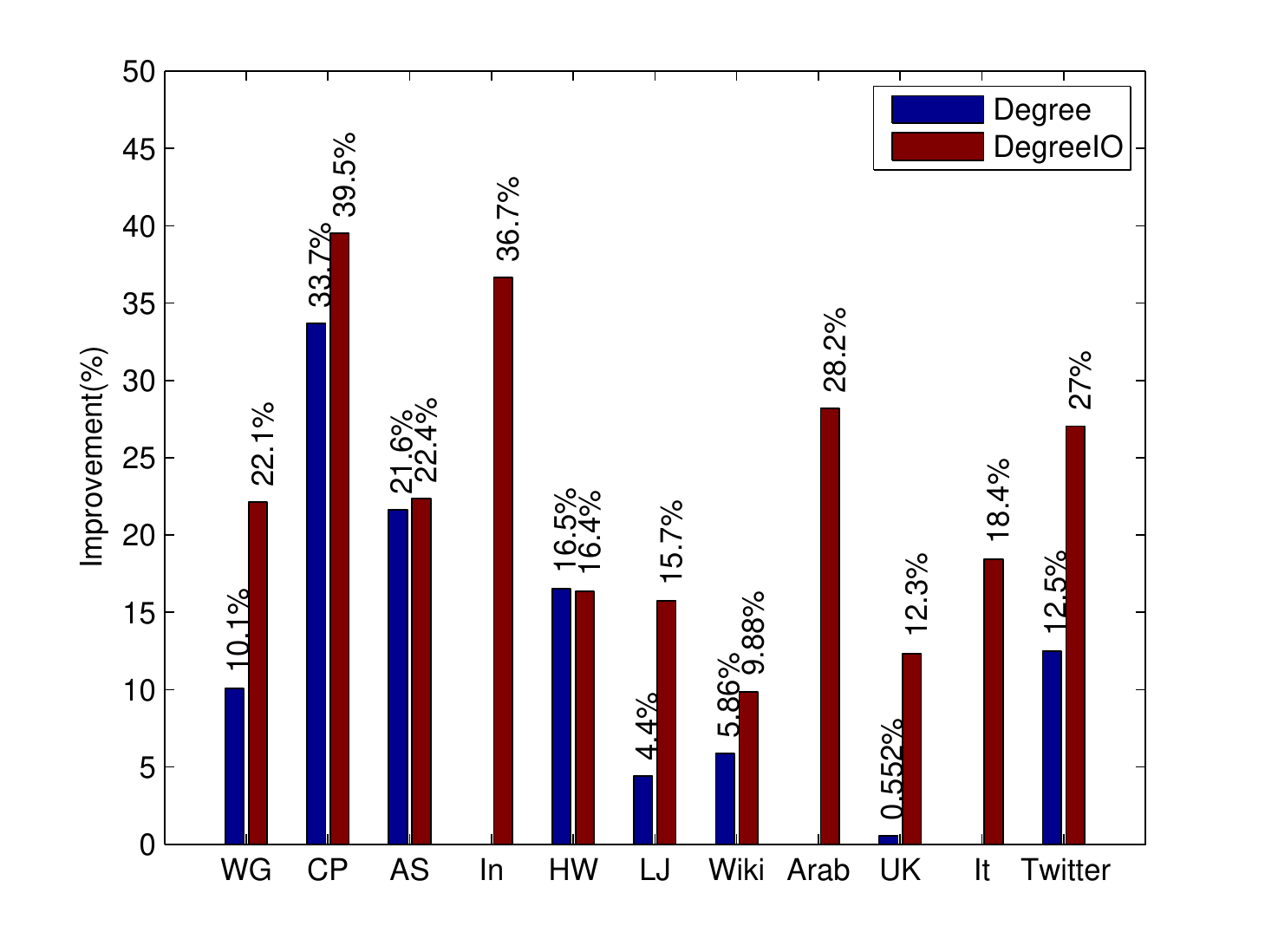}}
\subfigure[DFS order]{\includegraphics[width=0.32\textwidth]{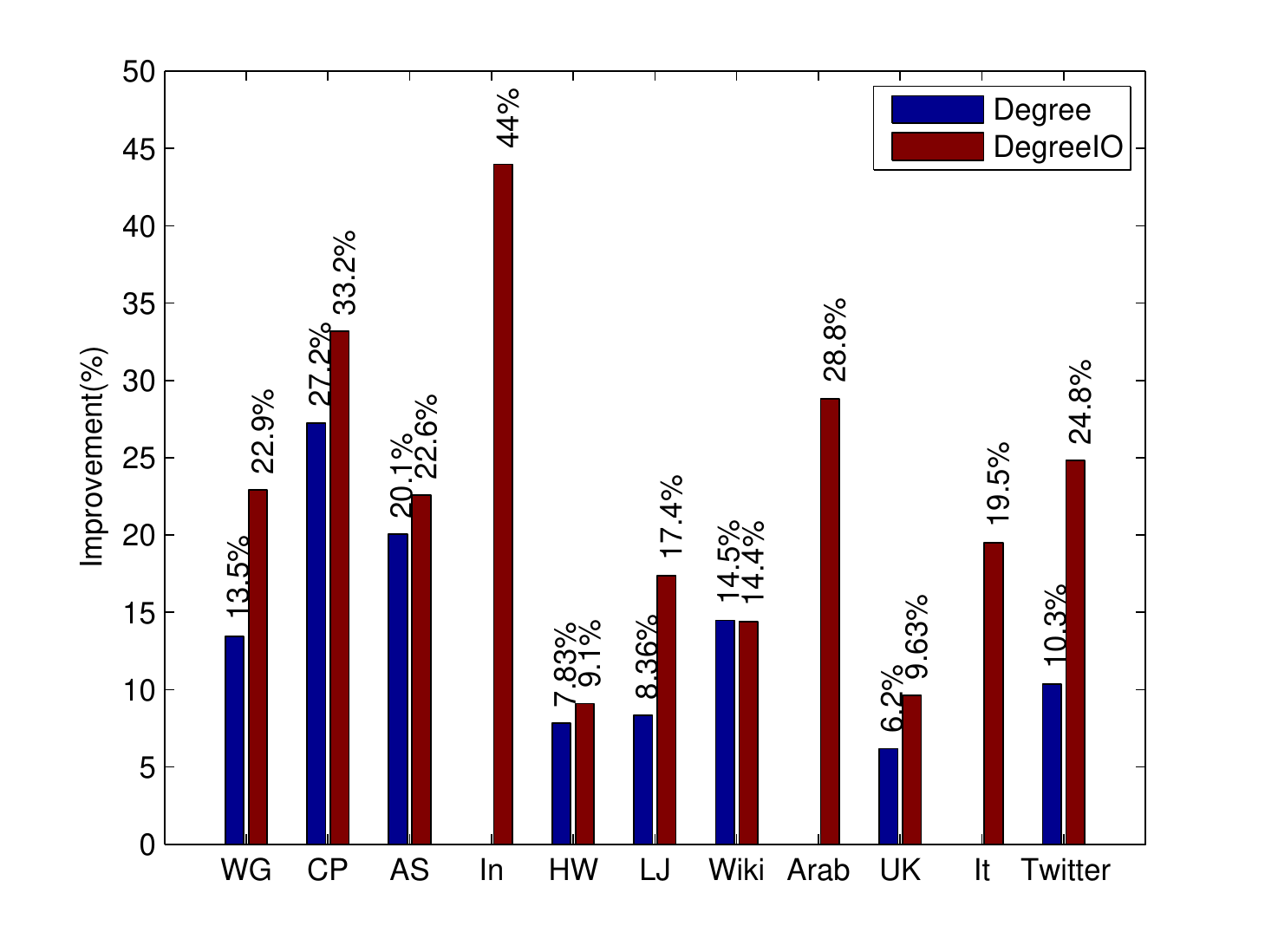}}
\vskip -0.2cm
\caption{\small Replication factor~(first row) and improvement~(second row) on real-world graphs in different stream orders.}
\label{fig:real}
\end{figure*}

In Figure~\ref{fig:sync}~{(a), (b), (c), (d), (e) and (f)}, we compared the replication factor on synthetic graphs with different power-law constants in different orders by different algorithms. In all the experiments the number of partitions is 48.
Figure~\ref{fig:sync_improvement}~{(a), (b), (c), (d), (e) and (f)} shows the improvement of the algorithms \textit{Degree} and \textit{DegreeIO} compared to \textit{Balance(PowerGraph)}.

\begin{figure*}[htb]
\centering
\subfigure[Rnd order]{\includegraphics[width=0.32\textwidth]{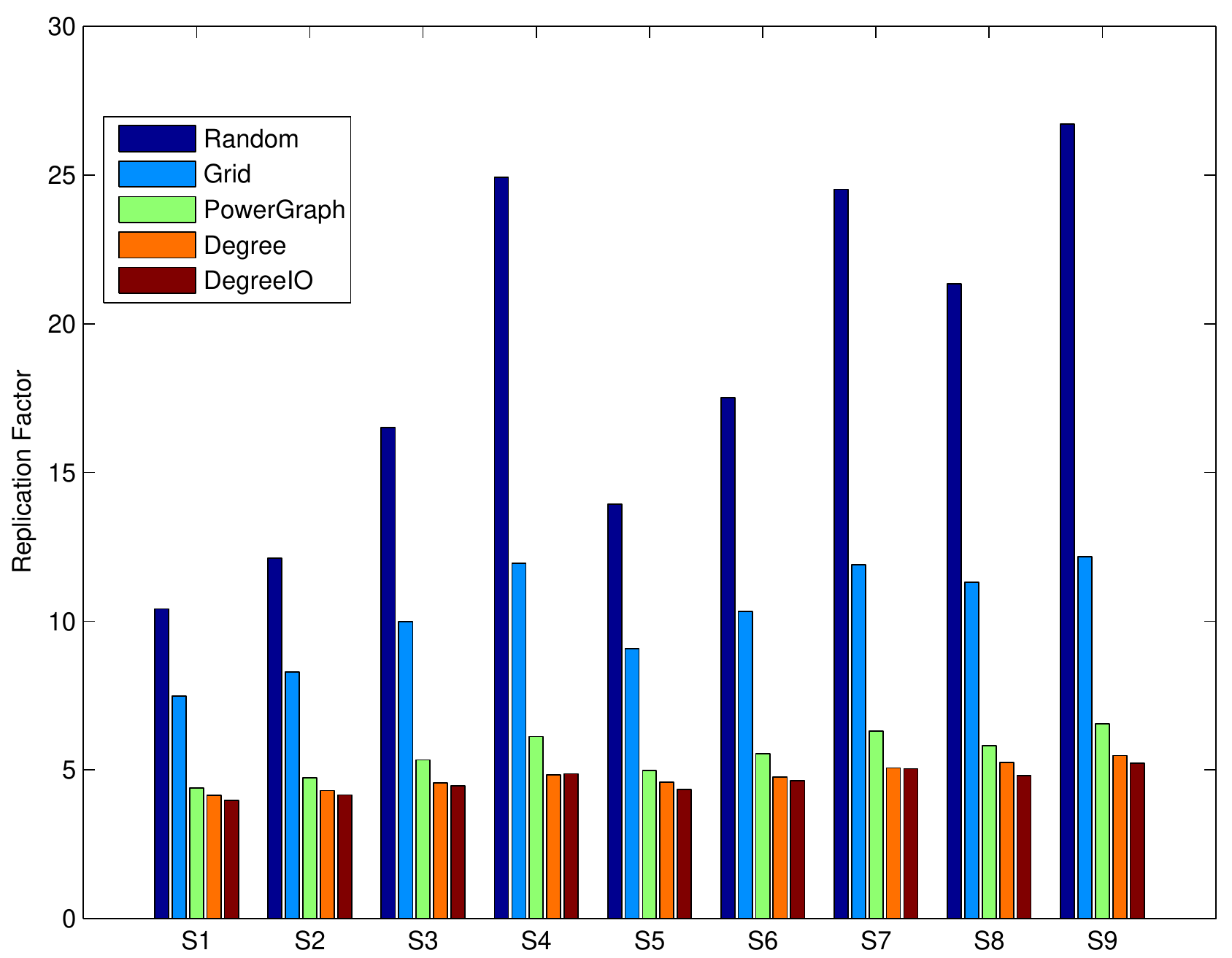}}
\subfigure[BFS order]{\includegraphics[width=0.32\textwidth]{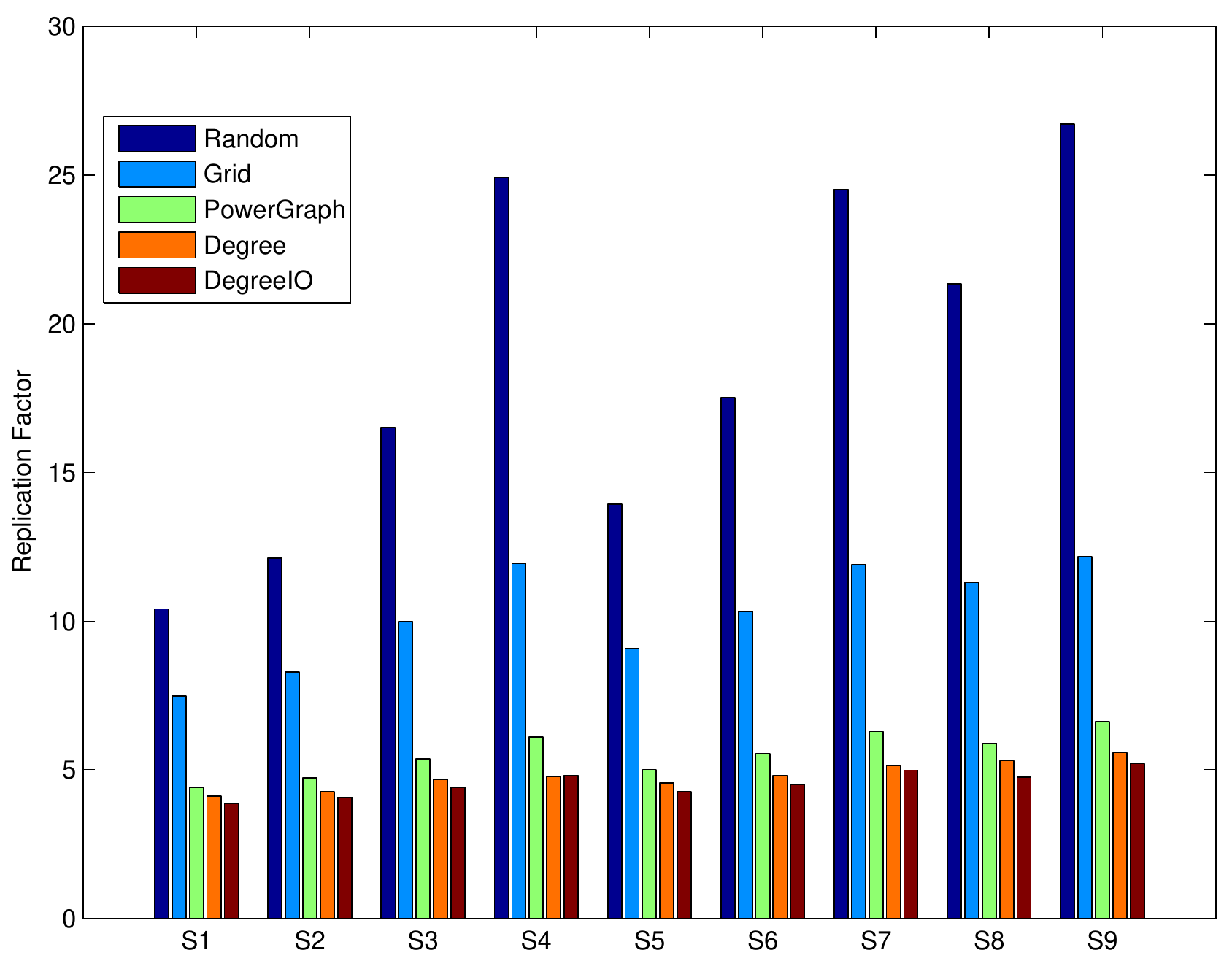}}
\subfigure[DFS order]{\includegraphics[width=0.32\textwidth]{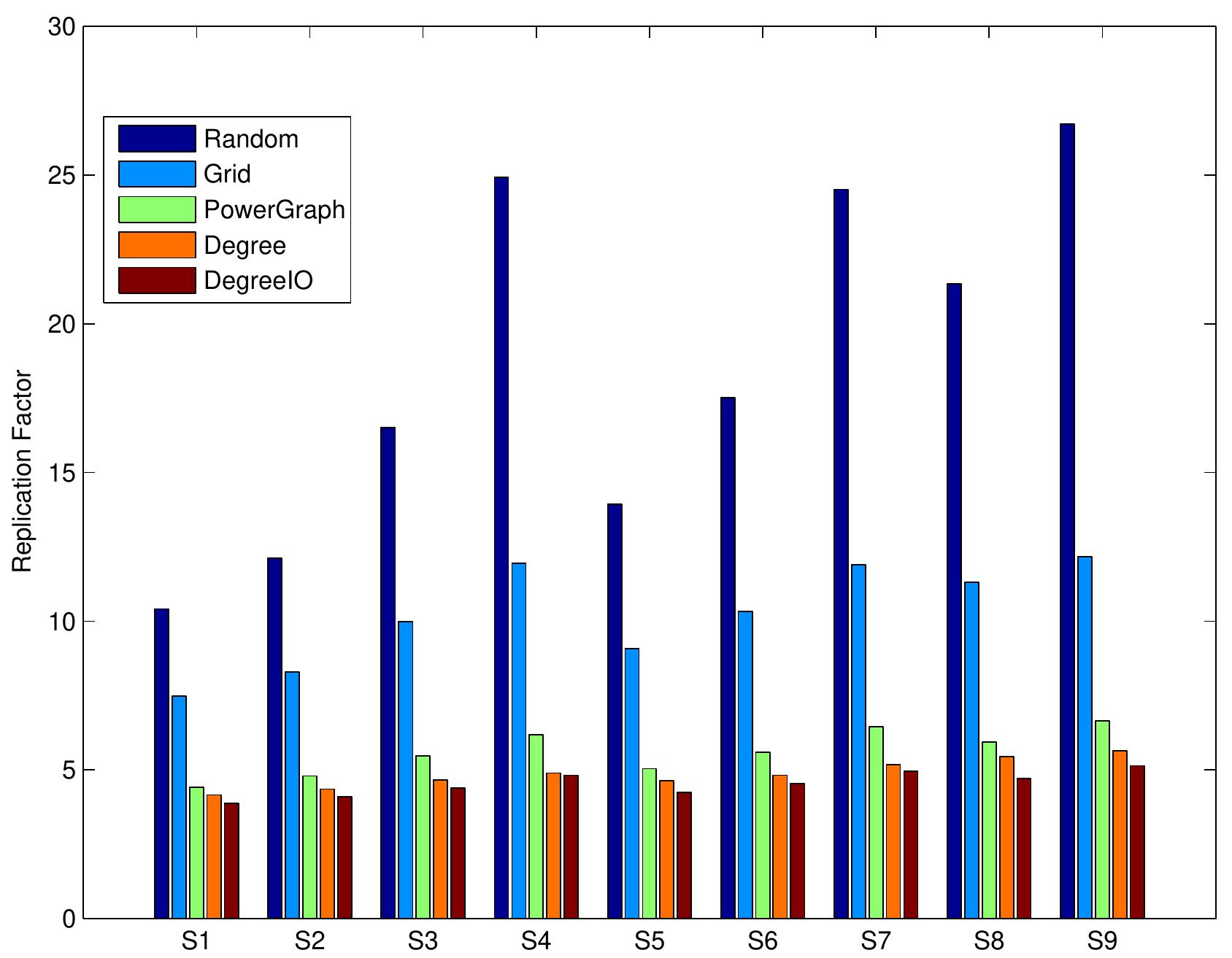}}
\\
\subfigure[Rnd order]{\includegraphics[width=0.32\textwidth]{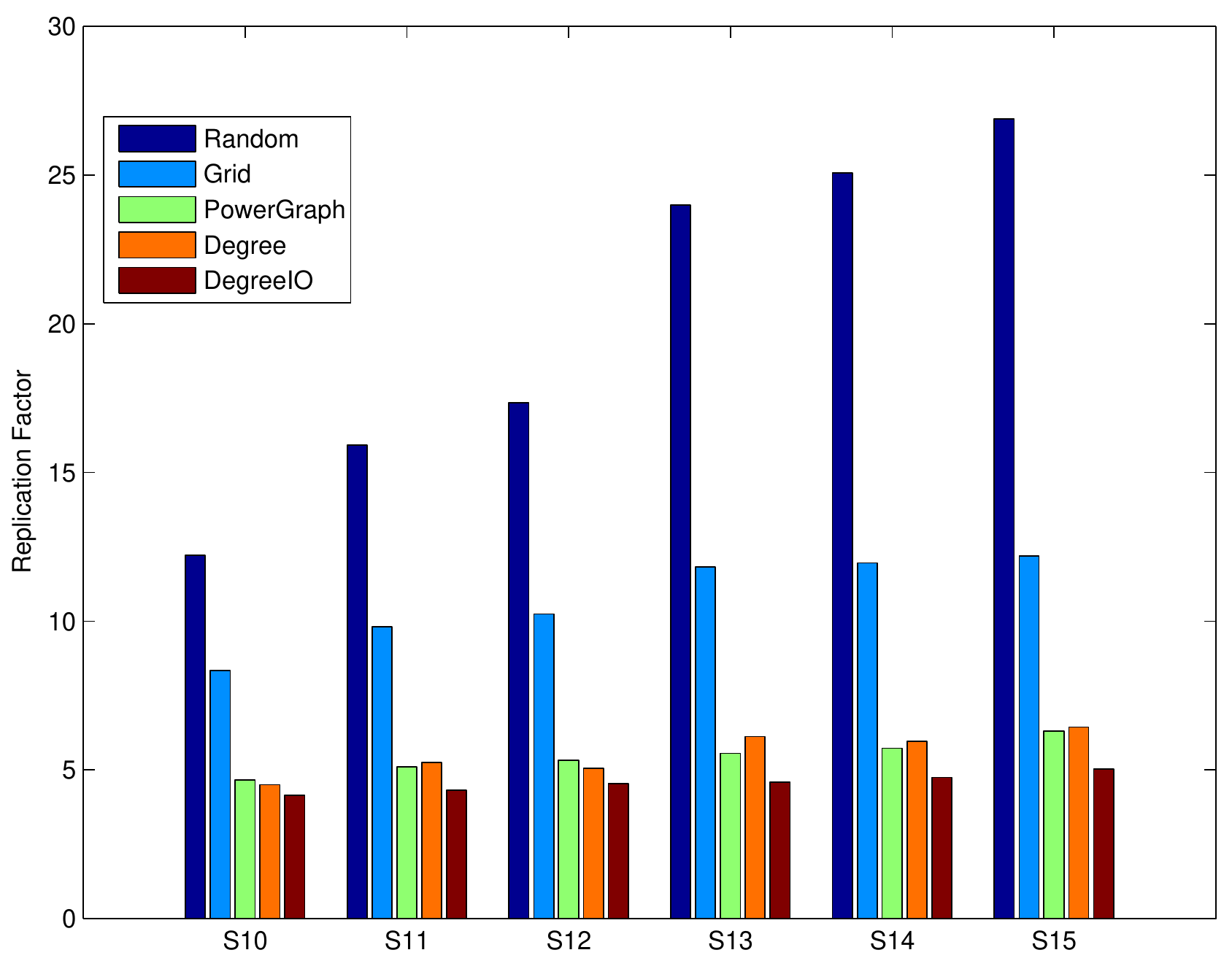}}
\subfigure[BFS order]{\includegraphics[width=0.32\textwidth]{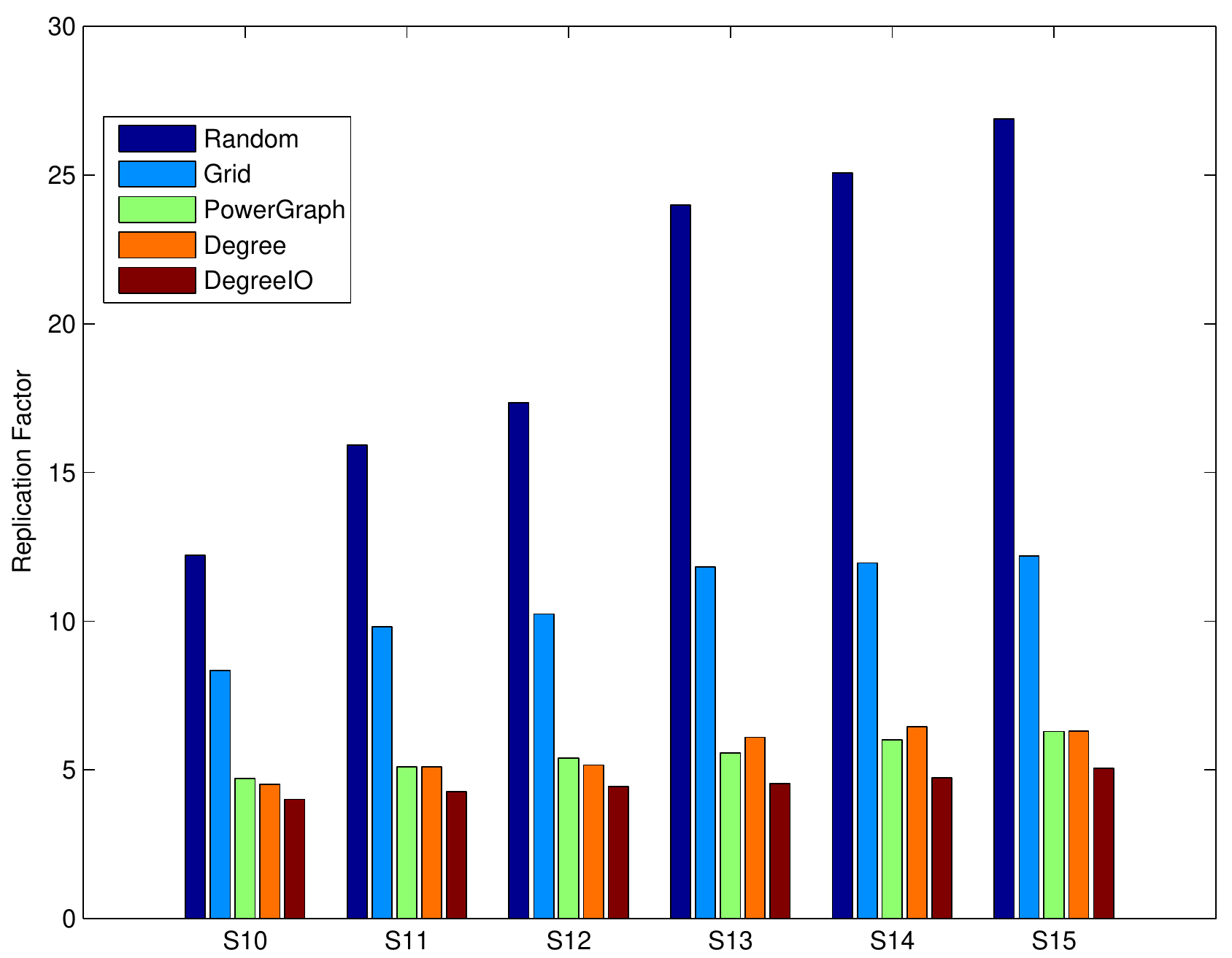}}
\subfigure[DFS order]{\includegraphics[width=0.32\textwidth]{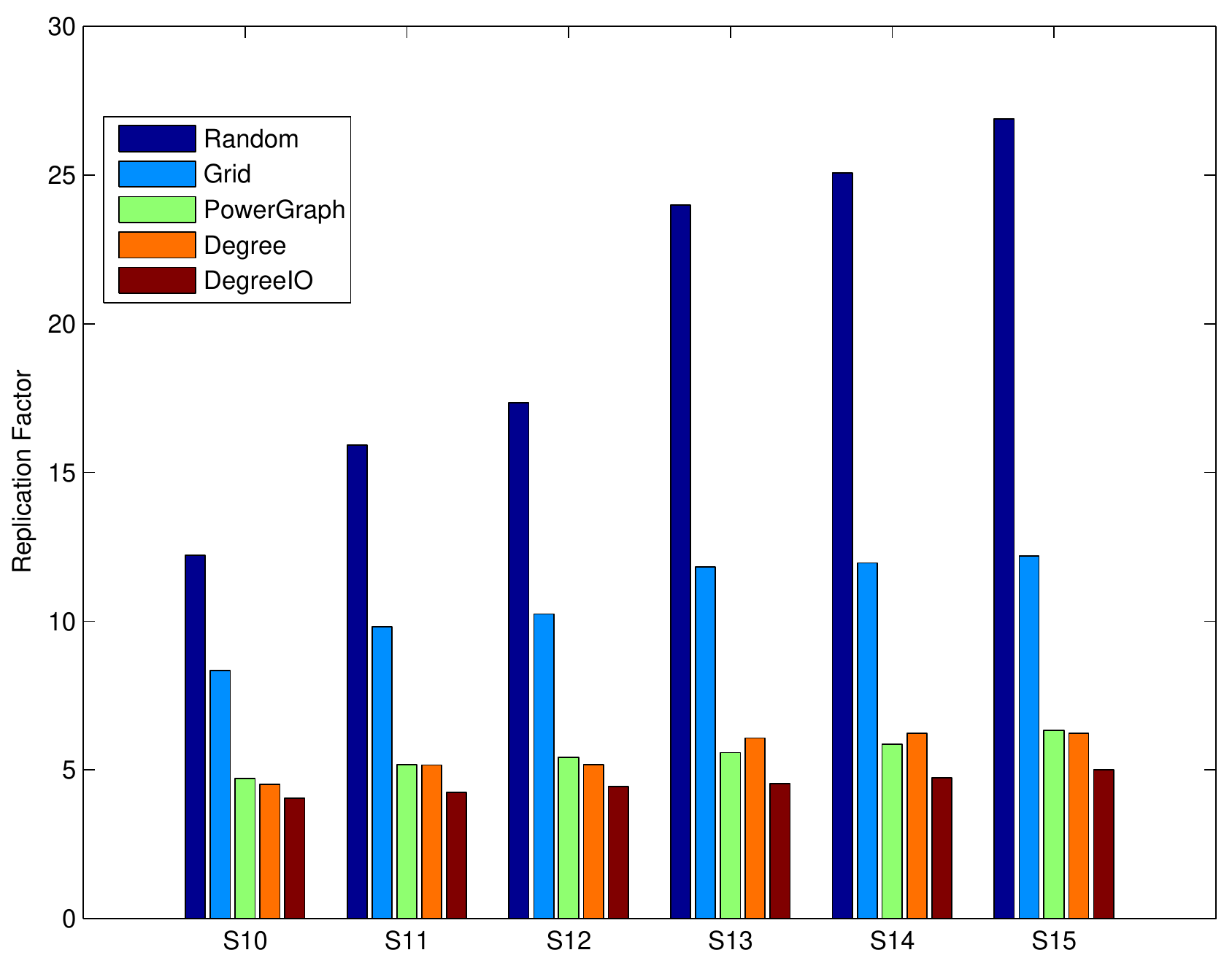}}
\vskip -0.2cm
\caption{\small Replication factor on synthetic graphs in different stream orders.}
\label{fig:sync}
\end{figure*}

\begin{figure*}[htb]
\centering
\subfigure[Rnd order]{\includegraphics[width=0.32\textwidth]{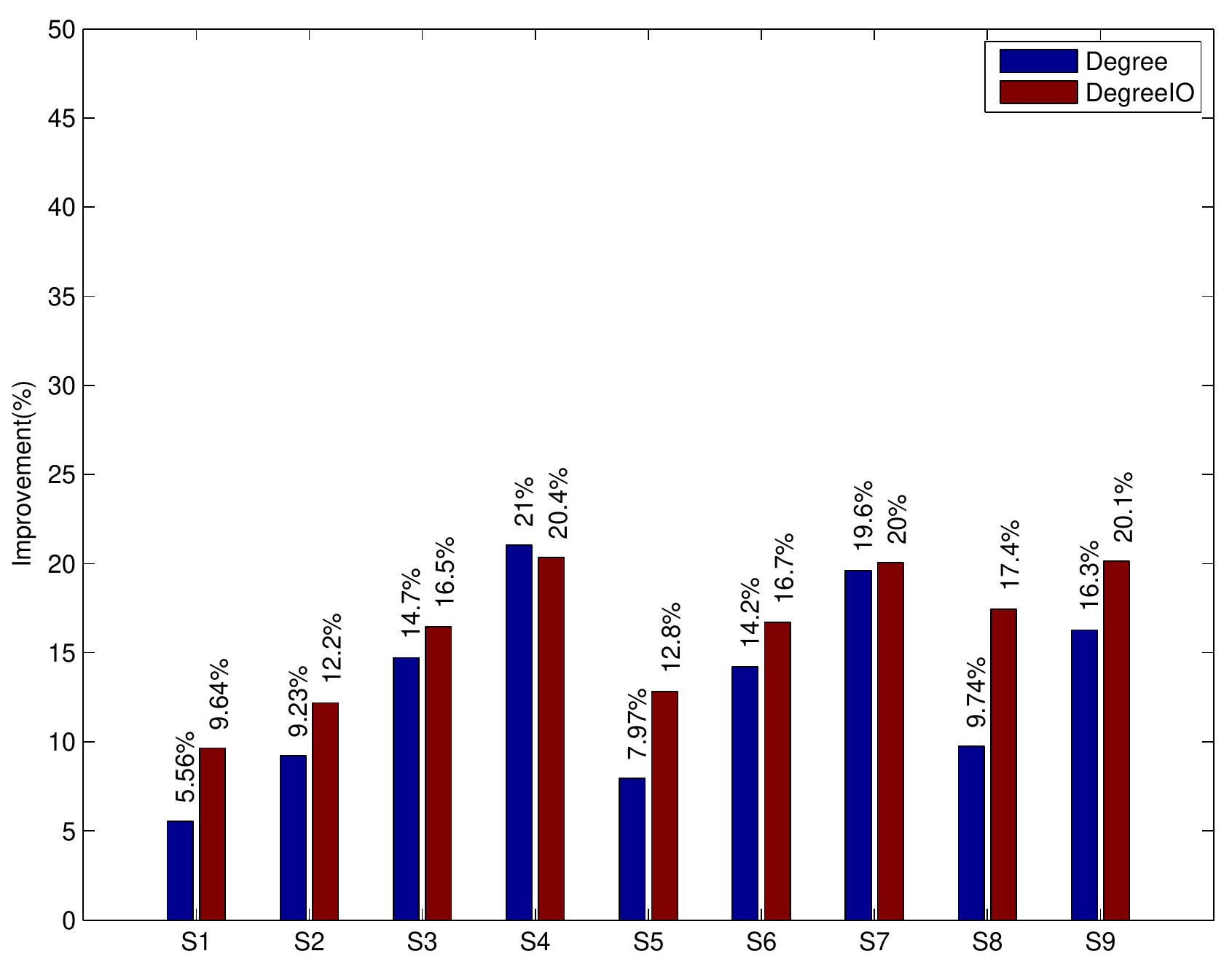}}
\subfigure[BFS order]{\includegraphics[width=0.32\textwidth]{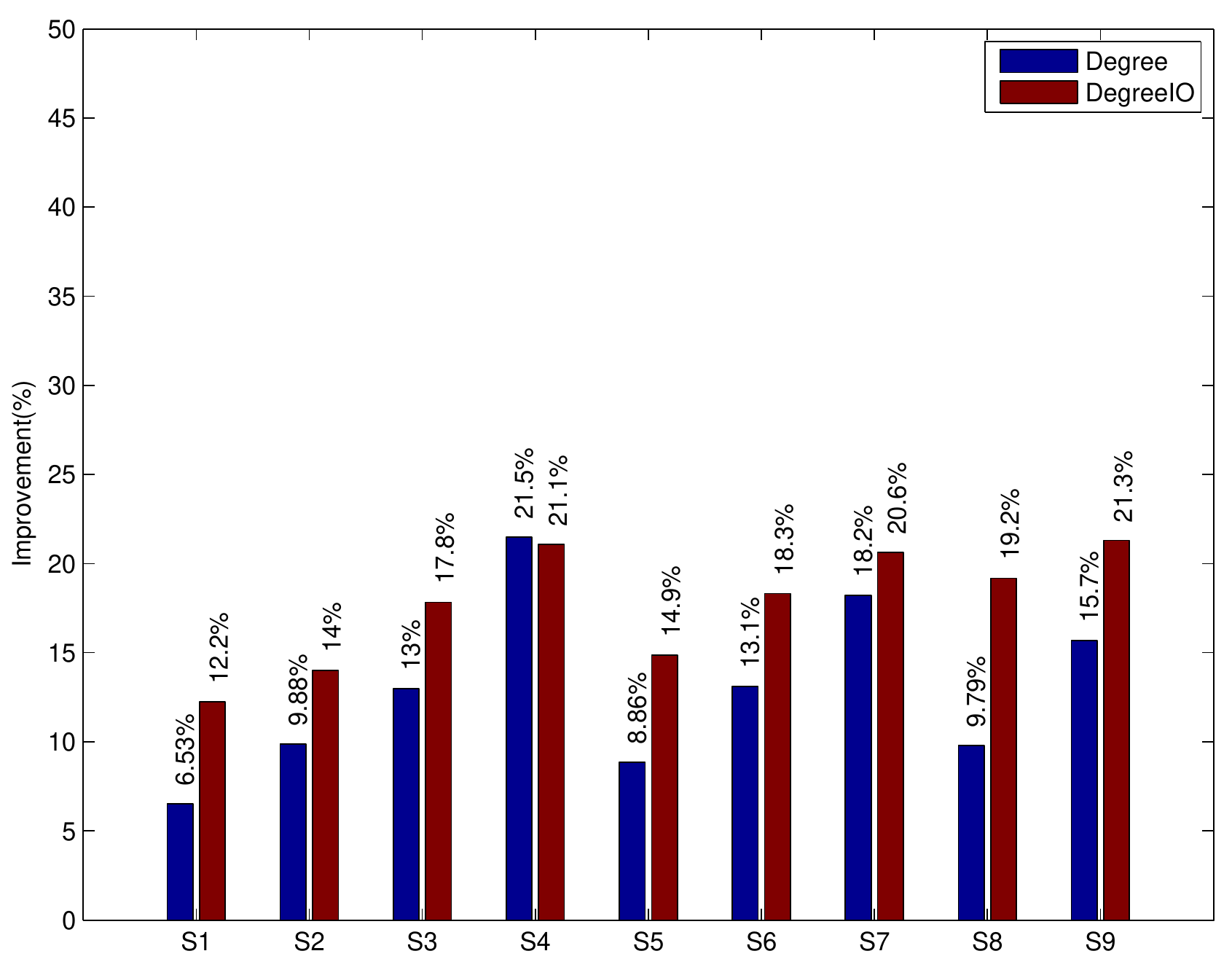}}
\subfigure[DFS order]{\includegraphics[width=0.32\textwidth]{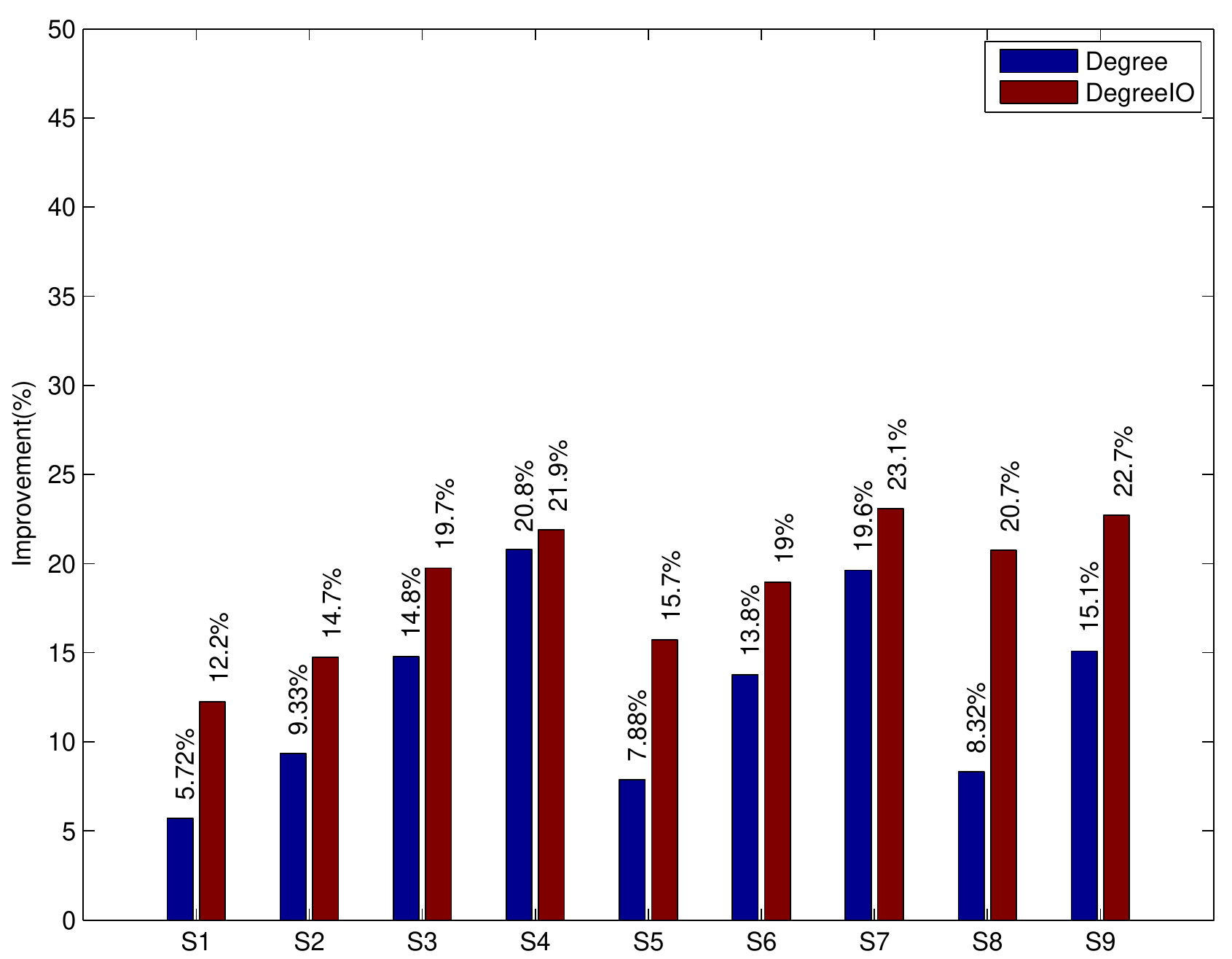}}
\\
\subfigure[Rnd order]{\includegraphics[width=0.32\textwidth]{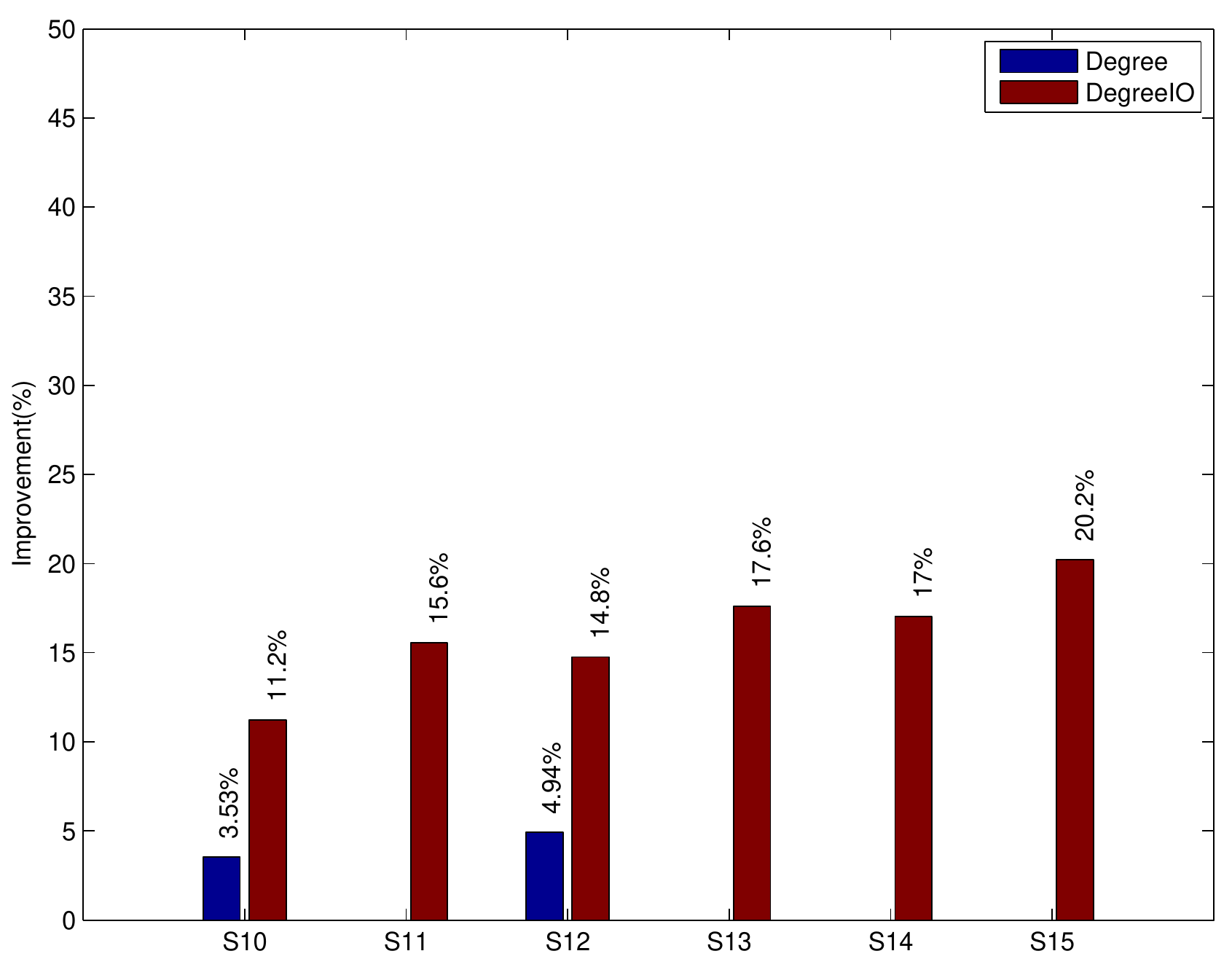}}
\subfigure[BFS order]{\includegraphics[width=0.32\textwidth]{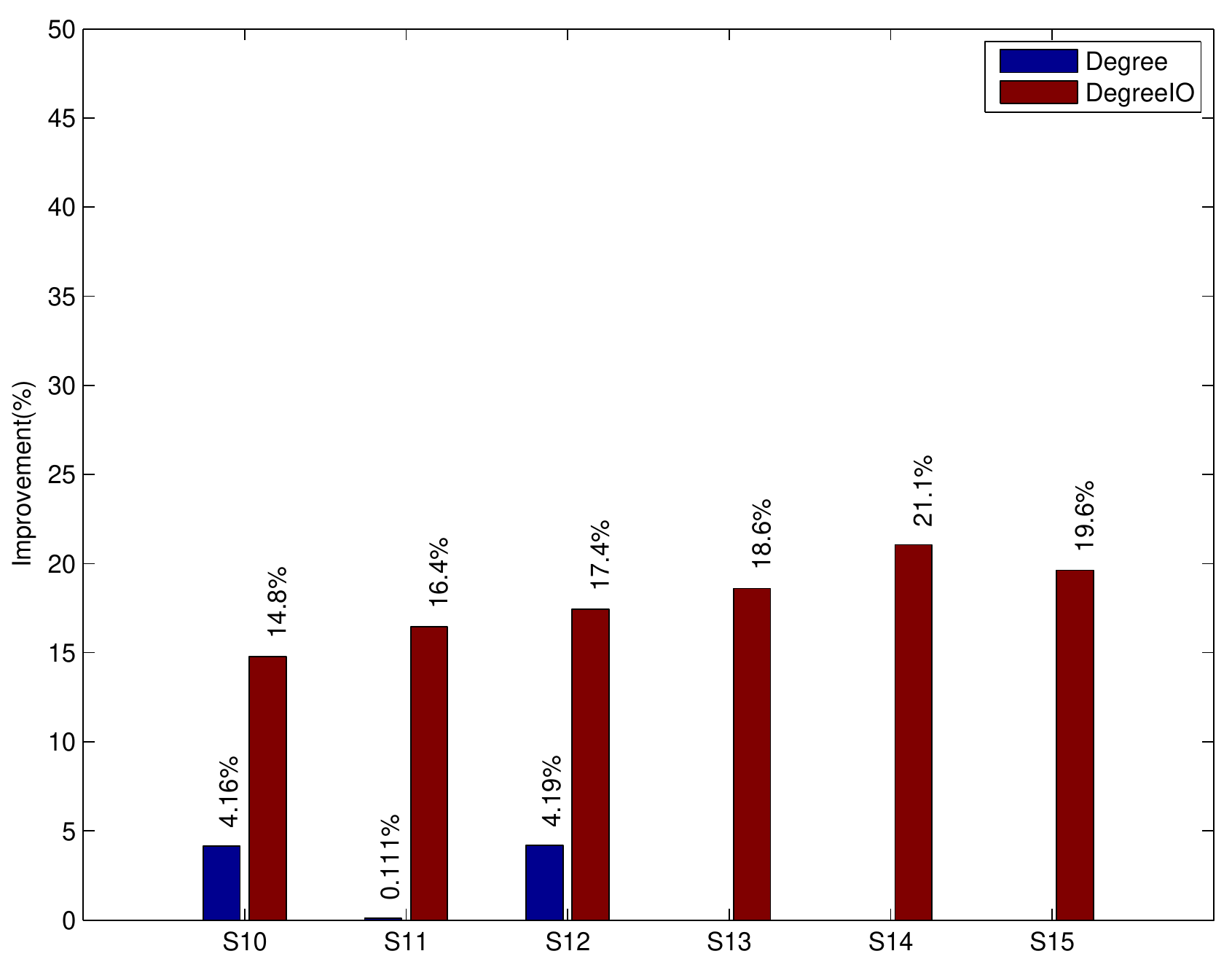}}
\subfigure[DFS order]{\includegraphics[width=0.32\textwidth]{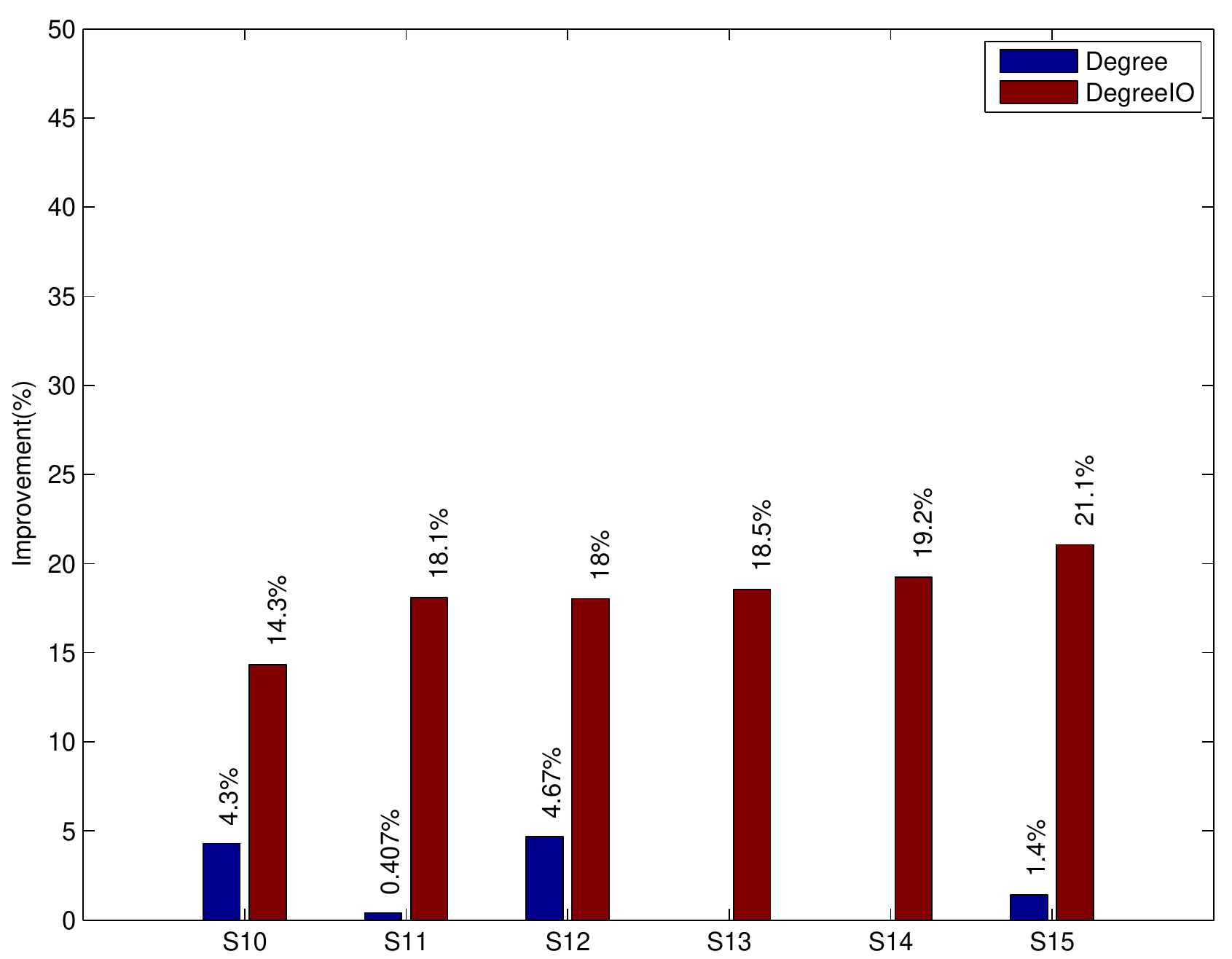}}
\vskip -0.2cm
\caption{\small Improvement on synthetic graphs in different stream orders~(compared with PowerGraph).}
\label{fig:sync_improvement}
\end{figure*}

In Table~\ref{imbalancetable1}, Table~\ref{imbalancetable2} and Table~\ref{imbalancetable3}, we compared the imbalance factor on different real-world graphs in different orders by different algorithms. In all the experiments the number of partitions is 48. The imbalance factors of our algorithms are a little bit higher but still tiny. Most of them are smaller than 1.01.
\begin{table}[htbp]
\caption{Imbalance factor on real-world graphs in random order (Here PowerGraph is abbreviated to PowerG. for a better layout.)}
\centering
\small
\begin{tabular}{|l|l|l|l|l|l|} \hline
			& \multicolumn{5}{|c|}{Method} \\ \hline
Graph		& Random	& Grid		& PowerG.	& Degree	& DegreeIO \\ \hline
WG    & 1.00725 & 1.04038 & 1.00024 & 1.00044 & 1.00041 \\
CP   & 1.00341 & 1.01943 & 1.00015 & 1.0005  & 1.00122 \\
As & 1.00408 & 1.06652 & 1.00005 & 1.00024 & 1.00018 \\
In    & 1.00394 & 1.02741 & 1.00075 & 1.01349 & 1.00298 \\
HW  & 1.00227 & 1.02349 & 1.00002 & 1.00008 & 1.00008 \\
LJ   & 1.00173 & 1.00703 & 1.00005 & 1.00006 & 1.00005 \\
Wiki       & 1.00145 & 1.03101 & 1.00002 & 1.0001  & 1.00003 \\
Arab     & 1.00057 & 1.01618 & 1.00007 & 1.00055 & 1.00006 \\
UK         & 1.0005  & 1.02545 & 1.00002 & 1.00144 & 1.00003 \\
It    & 1.00043 & 1.01874 & 1.00003 & 1.00031 & 1.00005 \\
Tw    & 1.00032 & 1.04214 & 1       & 1       & 1     	\\
\hline
\end{tabular}
\label{imbalancetable1}
\end{table}

\begin{table}[htbp]
\caption{Imbalance factor on real-world graphs in BFS order (Here PowerGraph is abbreviated to PowerG. for a better layout.)}
\centering
\small
\begin{tabular}{|l|l|l|l|l|l|} \hline
			& \multicolumn{5}{|c|}{Method} \\ \hline
Graph		& Random	& Grid		& PowerG.	& Degree	& DegreeIO \\ \hline
WG   & 1.00725 & 1.04038 & 1.00015 & 1.00036 & 1.10936 \\
CP  & 1.00341 & 1.01943 & 1.00007 & 1.00032 & 1.01744 \\
As & 1.00408 & 1.06652 & 1.00003 & 1.00021 & 1.00022 \\
In    & 1.00394 & 1.02741 & 1.01396 & 1.0163  & 1.04447 \\
HW  & 1.00227 & 1.02349 & 1.0245  & 1.0399  & 1.00011 \\
LJ   & 1.00173 & 1.00703 & 1.00001 & 1.00721 & 1.00006 \\
Wiki       & 1.00145 & 1.03101 & 1       & 1.00134 & 1.00001 \\
Arab    & 1.00057 & 1.01618 & 1.0001  & 1.00462 & 1.00848 \\
UK         & 1.0005  & 1.02545 & 1.00046 & 1.00069 & 1.00002 \\
It    & 1.00043 & 1.01874 & 1.00024 & 1.01175 & 1.06839 \\
Tw    & 1.00032 & 1.04214 & 1       & 1       & 1  \\
\hline
\end{tabular}
\label{imbalancetable2}
\end{table}

\begin{table}[htbp]
\caption{Imbalance factor on real-world graphs in DFS order (Here PowerGraph is abbreviated to PowerG. for a better layout.)}
\centering
\small
\begin{tabular}{|l|l|l|l|l|l|} \hline
			& \multicolumn{5}{|c|}{Method} \\ \hline
Graph		& Random	& Grid		& PowerG.	& Degree	& DegreeIO \\ \hline
WG    & 1.00725 & 1.04038 & 1.00014 & 1.00043 & 1.00024 \\
CP  & 1.00341 & 1.01943 & 1.00009 & 1.00027 & 1.00035 \\
As & 1.00408 & 1.06652 & 1.00002 & 1.00015 & 1.00019 \\
In    & 1.00394 & 1.02741 & 1.0161  & 1.01655 & 1.04168 \\
HW  & 1.00227 & 1.02349 & 1.03225 & 1.04545 & 1.00177 \\
LJ   & 1.00173 & 1.00703 & 1.00003 & 1.00534 & 1.00007 \\
Wiki       & 1.00145 & 1.03101 & 1       & 1.00002 & 1.00001 \\
Arab     & 1.00057 & 1.01618 & 1.0003  & 1.00787 & 1.02141 \\
UK         & 1.0005  & 1.02545 & 1.00022 & 1.00568 & 1.00002 \\
It    & 1.00043 & 1.01874 & 1.00018 & 1.047   & 1.00005 \\
Tw    & 1.00032 & 1.04214 & 1       & 1       & 1     	\\
\hline
\end{tabular}
\label{imbalancetable3}
\end{table}

In Figure~\ref{fig:twitter}~{(a), (b) and (c)}, we evaluate our algorithm on Twitter-2010 in different order to compare the sensibility to the number of partitions. Our algorithms is always better and the improvement gets more obvious when the number of partitions increases.
\begin{figure*}[!htb]
\subfigure[Rnd order]{\includegraphics[width=0.32\textwidth]{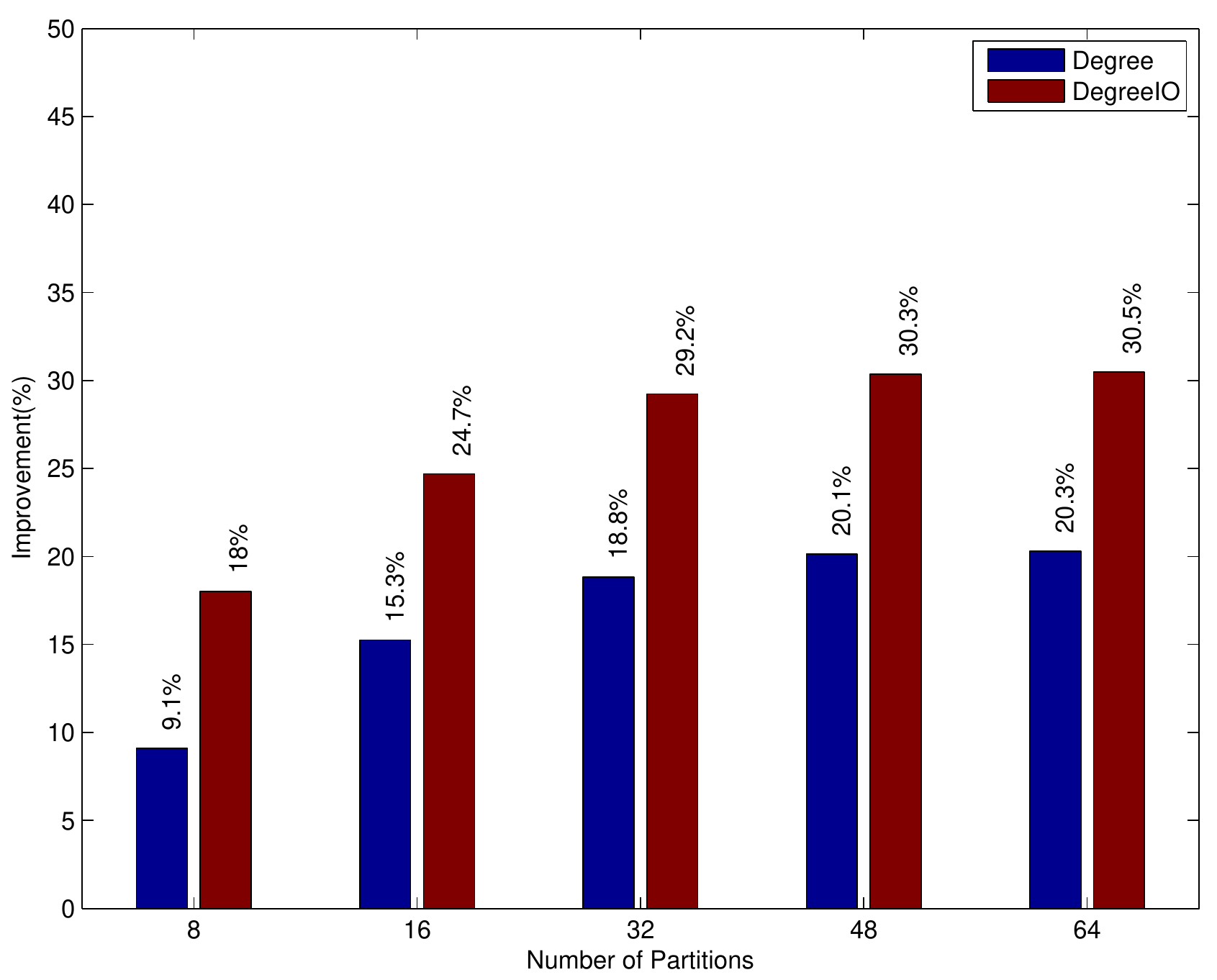}}
\subfigure[BFS order]{\includegraphics[width=0.32\textwidth]{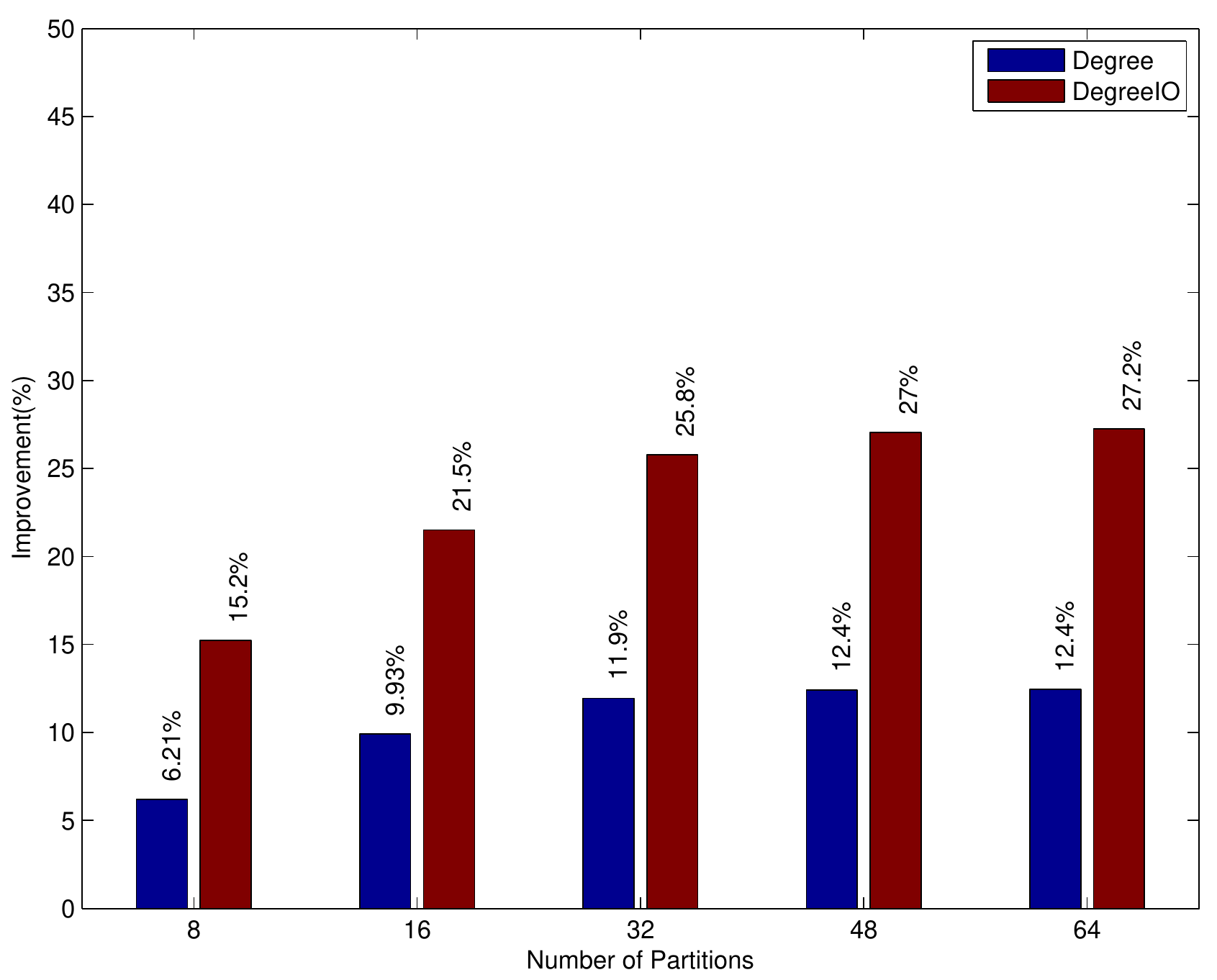}}
\subfigure[DFS order]{\includegraphics[width=0.32\textwidth]{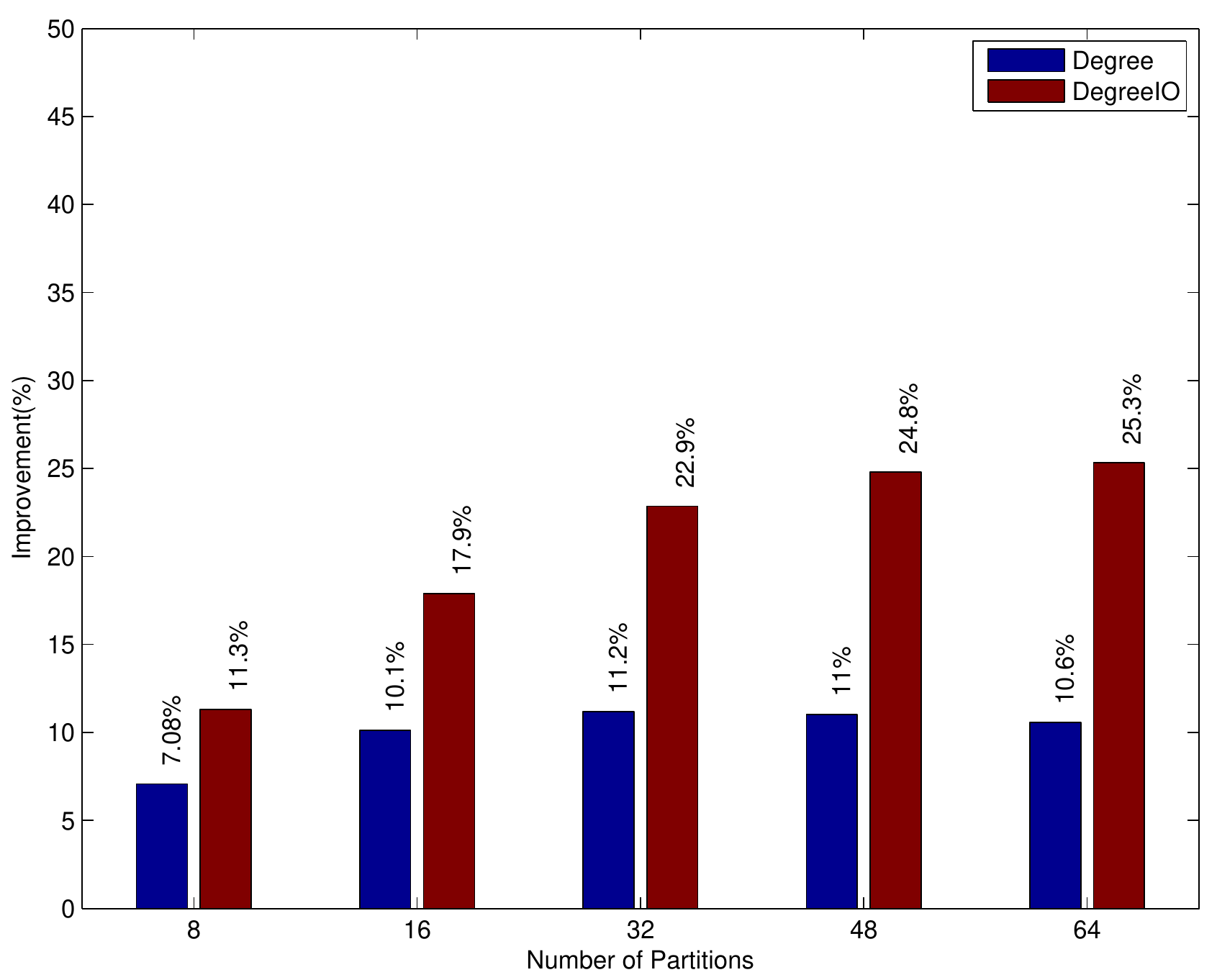}}
\vskip -0.2cm
\caption{\small Improvement on Twitter-2010 in different stream orders~(compared with PowerGraph).}
\label{fig:twitter}
\end{figure*}


\subsection{Discussion}

Taking into account all the experiments, our algorithm \textit{DegreeIO} is better than \textit{Balance(PowerGraph)} in all datasets. And \textit{Degree} improves the performance of \textit{Balance(PowerGraph)} in most cases.
For the real-world graphs, we can see that the improvement on WebGoogle is not very obvious. The reason may be that the WebGoogle graph is not so skewed and the power-law constant may be high. For the largest dataset Twitter-2010, \textit{Degree} gets an improvement for about 15\% to 20\% while \textit{DegreeIO} gets nearly 25\% to 30\%. Our algorithms produce more improvement than the algorithm introduced by PowerGraph\cite{gonzalez2012powergraph} when the graphs are more skewed in the degree distribution which can be shown in the experiment on synthetic graphs. This kind of phenomenon is a consequence of the motivation of our algorithm. Recall that our algorithm makes use of the power-law degree distribution to further reduce the vertex-cut. So our algorithm will not produce an obvious improvement if the distribution of degrees are nearly uniform, and it works better when the graph is more skewed.
Finally the experiments on Twitter-2010 shows good scalability for our algorithms.

Note that the algorithm \textit{Degree} does not always produce a better replication factor than \textit{Balance(PowerGraph)}. The reason is that \textit{Degree} estimates the degree distribution only by the in-degree. Thus if the out-degree is more skewed than the in-degree, for example In-2004, Arabic-2005, It-2004 and the second part of the synthetic graph collection, the estimation will be wrong which will result in bad partitioning. And \textit{DegreeIO} estimate the degree distribution by both in-degree and out-degree, so the performance is better in such datasets. For example, the improvement of Arabic-2005 is nearly 35\%.


\section{Conclusion}

In this paper we have studied streaming graph partitioning by vertex-cut. In particular,  we have proposed a novel streaming graph partitioning method for natural graphs by adopting vertex-cut strategy and called it \emph{S-PowerGraph}. To the best of our knowledge, this is the first work to systematically study the effectiveness of vertex-cut in streaming graph partitioning. Experiments on large graphs show that our method is more suitable for partitioning skewed natural graphs than the previous approaches with an acceptable imbalance factor. Our method can be used to partition the raw graph data from a web crawler or the graph data loaded from a single storage device. In our future work, we would like to pursued more variants of the proposed strategy theoretically and empirically.

\bibliographystyle{abbrv}
\bibliography{spowergraph}

\begin{thebibliography}{10}

\bibitem{adamic2002zipf}
L.~A. Adamic and B.~A. Huberman.
\newblock Zipf's law and the internet.
\newblock {\em Glottometrics}, 3(1):143--150, 2002.

\bibitem{boldi2004webgraph}
P.~Boldi and S.~Vigna.
\newblock The webgraph framework {I}: compression techniques.
\newblock In {\em Proceedings of the 13th international conference on World
  Wide Web}, pages 595--602, 2004.

\bibitem{chen2013institute}
R.~Chen, J.~Shi, Y.~Chen, H.~Guan, and H.~Chen.
\newblock Powerlyra: Differentiated graph computation and partitioning on
  skewed graphs.
\newblock Technical Report IPADSTR-2013-001, Institute of Parallel and
  Distributed Systems, Shanghai Jiao Tong University, November 2013.

\bibitem{davis2011university}
T.~A. Davis and Y.~Hu.
\newblock The {U}niversity of {F}lorida sparse matrix collection.
\newblock {\em ACM Transactions on Mathematical Software}, 38(1):1, 2011.

\bibitem{gonzalez2012powergraph}
J.~E. Gonzalez, Y.~Low, H.~Gu, D.~Bickson, and C.~Guestrin.
\newblock Powergraph: Distributed graph-parallel computation on natural graphs.
\newblock In {\em Proceedings of the 10th USENIX Symposium on Operating Systems
  Design and Implementation (OSDI)}, pages 17--30, 2012.

\bibitem{jain2013graphbuilder}
N.~Jain, G.~Liao, and T.~L. Willke.
\newblock Graphbuilder: scalable graph etl framework.
\newblock In {\em First International Workshop on Graph Data Management
  Experiences and Systems}, page~4, 2013.

\bibitem{karypis1995metis}
G.~Karypis and V.~Kumar.
\newblock Multilevel graph partitioning schemes.
\newblock In {\em Proceedings of the International Conference on Parallel
  Processing}, pages 113--122, 1995.

\bibitem{kwak2010twitter}
H.~Kwak, C.~Lee, H.~Park, and S.~Moon.
\newblock What is twitter, a social network or a news media.
\newblock In {\em Proceedings of the 19th international conference on World
  Wide Web}, pages 591--600, 2010.

\bibitem{leskovec2011stanford}
J.~Leskovec.
\newblock Stanford large network dataset collection.
\newblock {\em URL http://snap. stanford. edu/data/index. html}, 2011.

\bibitem{low2010graphlab}
Y.~Low, J.~Gonzalez, A.~Kyrola, D.~Bickson, C.~Guestrin, and J.~M. Hellerstein.
\newblock {GraphLab}: A new framework for parallel machine learning.
\newblock In {\em Proceedings of the Conference on Uncertainty in Artificial
  Intelligence (UAI)}, pages 340--349, 2010.

\bibitem{mislove2007measurement}
A.~Mislove, M.~Marcon, K.~P. Gummadi, P.~Druschel, and B.~Bhattacharjee.
\newblock Measurement and analysis of online social networks.
\newblock In {\em Proceedings of the 7th ACM SIGCOMM conference on Internet
  measurement}, pages 29--42, 2007.

\bibitem{DBLP:conf/kdd/NishimuraU13}
J.~Nishimura and J.~Ugander.
\newblock Restreaming graph partitioning: simple versatile algorithms for
  advanced balancing.
\newblock In {\em Proceedings of the ACM SIGKDD international conference on
  Knowledge discovery and data mining}, pages 1106--1114, 2013.

\bibitem{stanton2014SODA}
I.~Stanton.
\newblock Streaming balanced graph partitioning algorithms for random graphs.
\newblock In {\em Proceedings of the ACM-SIAM Symposium on Discrete Algorithms
  (SODA)}, pages 1287--1301, 2014.

\bibitem{stanton2012streaming}
I.~Stanton and G.~Kliot.
\newblock Streaming graph partitioning for large distributed graphs.
\newblock In {\em Proceedings of the ACM SIGKDD international conference on
  Knowledge discovery and data mining}, pages 1222--1230, 2012.

\bibitem{tsourakakis2012fennel}
C.~E. Tsourakakis, C.~Gkantsidis, B.~Radunovic, and M.~Vojnovic.
\newblock {FENNEL}: Streaming graph partitioning for massive scale graphs.
\newblock In {\em Proceedings of the ACM International Conference on Web Search
  and Data Mining}, 2014.

\end{thebibliography}

\end{document}